\newcommand{\anatree}{\ensuremath{t}}
\newcommand{\sometermset}{\ensuremath{\mathbb{T}}}
\newcommand{\domain}{\ensuremath{D}}
\newcommand{\OR}{\ensuremath{\mathtt{OR}}}
\newcommand{\AND}{\ensuremath{\mathtt{AND}}}
\newcommand{\power}[1]{\ensuremath{\mathcal{P}(#1)}}
\newcommand{\attrdomain}{\ensuremath{D}}
\newcommand{\hy}[1]{\text{-}} 
\newcommand{\valuationFunc}{\alpha}
\newcommand{\labels}{\mathcal{L}}
\newcommand{\labelFunc}{\ensuremath{\mathit{root}}}
\newcommand{\labelsFunc}{\ensuremath{\mathit{labels}}}
\newcommand{\tfun}{\rightarrow}
\newcommand{\ineqpred}{\ensuremath{\mathcal{P}_\mathit{ineq}}}
\newcommand{\predType}[1]{\ensuremath{s_{#1}}}
\newcommand{\weak}{\preceq}
\newcommand{\maxweak}{\preceq_{\!M}^{}}
\newcommand{\valuationUni}{\mathit{VAL}_{\labelsFunc(t)\to\domain}}
\newcommand{\maps}[2]{\ensuremath{\mathcal{F}(#1, #2)}}
\newcommand{\distance}{\ensuremath{d}}
\newcommand{\metric}{\ensuremath{\distance}}
\newcommand\ATMFraudLabel{\prt{atm-fraud}}
\newcommand\CardSkimmingLabel{\prt{card-skimming}}
\newcommand\CardTrappingLabel{\prt{card-trapping}}
\newcommand\CashTrappingLabel{\prt{cash-trapping}}
\newcommand\TransactionReversalLabel{\prt{trans-reversal}}
\newcommand\AccessATMLabel{\prt{access-atm}}
\newcommand\ExecuteAttackLabel{\prt{execute-attack}}
\newcommand\BreakingIntoLabel{\prt{break-in}}
\newcommand\SocialEngineeringStaffLabel{\prt{social-engineer-staff}}
\newcommand\GetCredentialsLabel{\prt{get-credentials}}
\newcommand\ShoulderSurfLabel{\prt{shoulder-surf}}
\newcommand\GetPINLabel{\prt{get-pin}}
\newcommand\GetCardLabel{\prt{get-card}}
\newcommand\InstallCameraLabel{\prt{install-camera}}
\newcommand\InstallEPPLabel{\prt{install-epp}}
\newcommand\TakeCardLabel{\prt{take-card-phys}}
\newcommand\SocialEngineerOwnerLabel{\prt{social-engineer-owner}}
\newcommand\InstallSkimmerLabel{\prt{install-skimmer}}
\newcommand\CloneCardLabel{\prt{clone-card}}
\newcommand\StealCardLabel{\prt{steal-card}}
\newcommand{\validin}{\vdash}
\newcommand{\prt}[1]{\ensuremath{\operatorname{\mathit{#1}}}}
\newcommand\moneyAccountLabel{\prt{money-account}}
\newcommand\creditCardLabel{\prt{card}}
\newcommand\pinLabel{\prt{pin}}
\newcommand\hackAccountLabel{\prt{hack-account}}
\newcommand\moneyATMLabel{\prt{money-atm}}
\newtheorem{remark}{Remark}
\newtheorem{example}{Example}
\newtheorem{definition}{Definition}
\newtheorem{lemma}[remark]{Lemma}
\newtheorem{theorem}[remark]{Theorem}
\newtheorem{proposition}[remark]{Proposition}
\newtheorem{corollary}[remark]{Corollary}
\newcommand{\thesem}[1]{[\![#1]\!]_t}
\title{Attribute Evaluation on Attack Trees  with Incomplete Information}
\author{
  Ahto Buldas\\
  Cybernetica AS\\
  Estonia\\
  \And
  Olga Gadyatskaya\\
  CSC/SnT, University of Luxembourg\\
  Luxembourg\\
  \And
    Aleksandr Lenin \\
Tallin University of Technology\\
Estonia\\
  \And
  Sjouke Mauw\\
  CSC/SnT, University of Luxembourg\\
  Luxembourg\\
  \And
  Rolando Trujillo-Rasua\thanks{Corresponding author}\\
  Deakin University\\
  Australia\\
}
\begin{document}
\maketitle

\begin{abstract}
Attack trees are considered a useful tool for security modelling
	because they support qualitative as well as quantitative analysis.
	The quantitative approach is based on 
	values
	associated to each node in the tree, expressing, for instance, the
	minimal cost or probability of an attack.
	Current quantitative methods for attack trees allow the analyst to,
	based on an initial assignment of values to the leaf nodes, derive
	the values of the higher nodes in the tree.
	In practice, however, it shows to be very difficult to obtain reliable
	values for all leaf nodes. The main reasons are that data is only
	available for some of the nodes, that data is available for
	intermediate nodes rather than for the leaf nodes, or even that the
	available data is inconsistent.
	We address these problems by developing a generalisation of 
	the
	standard bottom-up calculation method 
	in three ways. First, we 
	allow initial attributions of non-leaf nodes. Second, we admit additional 
	relations between attack steps beyond those provided by the  underlying 
	attack 
	tree semantics. 
	Third, we support the calculation of an approximative solution in case of
	inconsistencies. We illustrate our method, which is based on
	constraint programming, by a comprehensive case study.

\end{abstract}

\keywords{Attack trees \and Constraint programming \and Historical data \and Risk assessment}

\section{Introduction}\label{sec:introduction}

Attack
trees are a useful and intuitive graphical modeling language introduced 
by Bruce Schneier \cite{Schn,schneier2011secrets} in 1999. Since then, it has 
enjoyed popularity in the security industry, as well as in the
research community. Attack 
trees have been equipped with various semantics 
\cite{MaOo,jhawar2015attack,horne2017semantics} and supported by tools 
\cite{amenaza,gadyatskaya2016attack}. They have also been enhanced
with various methods for quantitative analysis
\cite{KoMaSc-2012,Aslanyan-POST-2015,kumar2015quantitative,aslanyan2016quantitative,roy2012scalable,bistarelli2006defense,DBLP:conf/gamesec/BuldasL13,DBLP:conf/gamesec/LeninB14,DBLP:conf/gamesec/LeninWC15,TUT:LeninPhD2015},
 which allow determining for a given attack tree, for example, an 
 organisation's losses due to an 
attack, the probability that such an attack succeeds, or the cost of a
successful attack \cite{hong2017survey}.

 The underlying assumption upon which all these quantification methods are 
based is similar to the popular \emph{divide and 
	conquer} paradigm, in which a problem is recursively broken down into 
	smaller problems 
	that 
are theoretically simpler to reason about and solve.

A quantification method for attack trees often reduces to the assignment of 
attribute values to basic attack steps (leaf nodes in 
the tree). Such assignments are used in a bottom-up propagation manner to 
determine the value at the root 
node, which is a quantification measure for the scenario expressed in the 
tree~\cite{KoMaSc-2012}. 
It is largely believed that it is relatively easy to assign a reliable 
attribute value to a basic attack step, which is precise and refined enough.
Popular tools operating with attack trees, such as ADTool 
\cite{KordyKMS-QEST-13,gadyatskaya2016attack} and SecurITree \cite{amenaza}, 
work exactly under 
this premise.

In practice, the assumption that attribute values for more concrete attack 
steps 
are easier to obtain has proven incorrect. 
Indeed, most companies manage to obtain statistical data for abstract 
attacks, e.g.\ frequency of skimming attacks, while they might struggle to come 
up 
with similar data for more refined attacks, e.g.\ frequency of stereo skimming 
attacks based on audio technology. For security consultants, it might be 
feasible to obtain reliable estimations for (at least some) abstract attacks in 
relevant domains, but precise historical data for low-level attack steps might 
be out of reach. Thus, we observe that there is a tension between the limited 
availability of data and the requirement to provide data values for all leaves 
in an attack tree before proceeding with a quantification method.

Today, existing quantitative approaches for attack trees cannot handle values 
of intermediate nodes in the tree that may become available from historical 
data. Moreover, they do not support the use of additional constraints over 
nodes in the tree, which are obtained from external sources of information 
rather than from the attack tree model itself. For example, the analysts may be 
confident that card skimming attacks are more 
frequent than physical attacks on card holders. Such a relation cannot be 
captured in an attack tree model, because it is not a hierarchical relation, 
hence it is ignored in current quantitative approaches for attack trees.

There is clearly a need for novel computation methods on attack trees
that account for available historical data and domain-specific
knowledge. In this paper, we formulate a general \emph{attack-tree
decoration problem} that treats assignment of values to tree nodes as
a problem of finding a set of data values satisfying a set of
predicates. These predicates arise from the attack tree structure and
the target attribute to be computed (i.e.\ semantics) and from the
attainable historical values and domain knowledge (i.e.\ available
data). Our methodology to solve the attack-tree decoration problem
accounts for scenarios in which the set of predicates cannot be jointly
satisfied, due to inconsistencies or possible noise in the data.

\noindent\emph{Contributions.} 
In this work 

\begin{itemize}
	\item We transform an attack tree semantics together with an attribute 
	interpretation into a constraint satisfaction problem 
	(Section~\ref{sec-decoration}). If
the attack tree semantics is consistent with the attribute
interpretation, this allows us to determine appropriate attribute
values for all nodes in the tree.  
	\item Because confidence in the available historical data 
and domain-specific knowledge may vary, we provide a methodology to deal with 
inconsistencies (Section~\ref{sec-methodology}). The usefulness of our 
approach is that any consistent valuation is better than no valuation, as it 
will enable the follow up process of using the attack tree for what-if 
analysis. The standard bottom-up approach would result in absence of any 
valuation until all leaf node values can be assigned. 
	\item We introduce two concrete approaches to deal with inconsistencies 
	(Section~\ref{sec:sec5}). The first one determines 
the smallest subset of constraints that makes the decoration problem 
inconsistent, which is useful to find contradictory or wrong assumptions. The 
second one is suitable for constraints that are expressed in the form 
of inequalities. In this approach constraints are regarded consistent and an 
optimal decoration is always found. The proposed methodology has been 
implemented as proof-of-concept 
	software tools\footnote{Code available at \url{https://github.com/vilena/at-decorator}}.
	\item We validate our methodology and the implementations through a 
	comprehensive case study on the security of 
	Automatic Teller Machines 
	(Sections~\ref{sec-evaluation}-\ref{sec:empirical}).
\end{itemize}

\section{Related Work}
\label{sec-background}

Research articles on quantitative security analysis with attack trees in all 
their flavors (attack-defense trees, defense trees, etc. \cite{kordy2014dag}) 
often focus on \emph{providing extensions to attack trees} enabling more complex 
scenarios 
\cite{kordy2012computational,aslanyan2016quantitative,jhawar2016stochastic,gadyatskaya2016modelling,Arnold-POST14}
 and \emph{defining metrics for evaluating scenarios} captured as attack 
trees. Various metrics have been considered in the literature, for
instance, the probability/likelihood of an attack
\cite{BaKoMeSc,TUT:LeninPhD2015}, expected time until a successful attack \cite{BaKoMeSc}, attacker's utility \cite{lenin2014attacker,DBLP:conf/gamesec/BuldasL13,DBLP:conf/gamesec/LeninB14,DBLP:conf/gamesec/LeninWC15,TUT:LeninPhD2015}, 
return on security investment 
\cite{bistarelli2006defense}, or assessment of risks \cite{gadyatskaya2016bridging,potteiger2016software}. Bagnato et al.~\cite{BaKoMeSc} present a list of metrics found in security literature that can be computed on attack trees.
Yet, all these 
 approaches assume 
 that the data values to perform 
quantitative analysis of system security are readily available. Indeed, to the 
best of our knowledge, no methodologies have been developed for integrating 
historical data and domain-specific knowledge in quantitative analysis of 
attack trees. At the same time, even the inventor of attack trees Bruce 
Schneier has acknowledged the painstaking work for data collection that is a 
prerequisite for quantitative analysis on the trees \cite[Chap. 
21]{schneier2011secrets}.

Benini and Sicari~\cite{benini2008risk} have proposed a framework for attack tree-based security risk 
assessment. The approach relies on identification of 
security vulnerabilities, that are placed in the leaves of an attack tree. 
Quantitative parameters of the vulnerabilities, such as exploitability and 
damage potential, allow to estimate security risks to a system. In 
\cite{benini2008risk}, the exploitability parameters are initially evaluated 
based on the CVSS scores\footnote{Common Vulnerability Scoring System 
\url{https://www.first.org/cvss}} of the respective vulnerabilities, and then 
they are adjusted based on the expert judgement about the system context and 
mutual effect of vulnerabilities on each other expressed in a vulnerability 
dependency graph.
While this methodology offers more precise quantitative risk 
assessment with attack trees, it assumes that the attack tree is  constructed in 
a bottom-up manner. All system vulnerabilities have to be identified using a suitable 
technology, and accommodated in an attack tree. In complex environments it will likely be impractical to 
apply the bottom-up approach due to the huge amount of potential vulnerability 
combinations that can be exploited in various attacks.
Indeed, in practice attack trees are typically
designed in a top-down manner, when the analyst starts by conjecturing the 
main attacker's goal and  iteratively breaks it down into smaller subgoals 
\cite{MaOo,Schn,PoEM-2016,schneier2011secrets}. 

Recently, de Bijl~\cite{de2017using} studied the use of historical
data values to obtain attribute values for attack tree nodes. He
proposed several heuristics to deal with missing data values,
including the standard bottom-up algorithm to infer parent node
values, the reuse of values for recurring nodes, and the use of
various data sources to estimate certain attributes. For example, the
paper refers to \emph{distance to the police station} as a hidden
variable influencing \emph{probability} of attack.
Yet,~\cite{de2017using} does not define a precise methodology to
perform computations on attack trees with missing leaf node values.

\noindent\emph{Quantitative analysis of fault trees.}
Fault trees are close relatives of attack trees that are widely used in the reliability domain. There exists a large body of literature dedicated to quantitative analysis of fault trees. Yet, the standard bottom-up approach in attack trees is not the standard approach in fault trees, where min-cut set analysis and translation into more intricate models are common~\cite{ruijters2015fault}.

Fuzzy fault trees address evaluation of fault trees under uncertainty
\cite{mahmood2013fuzzy}, when failure statistics are not fully
available. The main difference is that fuzzy fault trees have been
developed to serve the needs of the reliability community and fault
tree application methods (fuzzy probability functions, error
propagation estimates, etc.), while the security community and attack
tree application methods have different needs (bottom-up computation
for a large variety of attributes). Therefore, solutions designed for
fault trees do not fully address the problems in the attack tree
space. 

\noindent\emph{Data issues in quantitative risk assessment.}
Attack trees are typically used for threat modeling and security risk 
assessment \cite{shostack2014threat}. Thus, it is necessary to evaluate the 
data availability perspective also in the more general context of security risk 
assessment. 
Indeed, the general 
question of data validity in quantitative security risk assessment (QSRA for 
short) and the reliability of QSRA results in presence of uncertainty in data 
values has been raised by many practitioners and researchers \cite{vose2008risk}. 

QSRA enables decision making based on quantitative estimations of some relevant 
variables (e.g.\ probability of an event, cost, time, vulnerability, etc.). 
These quantitative estimations are typically aggregated in a model
that can then 
be utilised by a decision maker \cite{vose2008risk}. Many studies, books on 
security, and industry reports have acknowledged that the quality of 
quantitative risk analysis, and, correspondingly, the decisions made based on 
it, heavily depends on the quality of data used 
\cite{vose2008risk,jaquith2007security,schneier2011secrets,baker2007necessary}. 
Notably, it has been established that probabilities of particular loss 
events and 
costs associated to security spending can be hard to obtain from
historical data 
\cite{jaquith2007security,oppliger2015quantitative,aven2007unified,bohme2010security,ahmed2007review}. This body of knowledge serves as evidence of inherent difficulty to obtain meaningful estimates for probability and cost of detailed attack steps, i.e., values for leaf nodes in attack trees.

Nevertheless, it has been acknowledged that, for instance, for
insurance companies it might be feasible to get meaningful data,
because they have access to an entire population, i.e.\ they have good
statistics \cite{oppliger2015quantitative}. It has also been
demonstrated, for example, that breach statistics can be used to
predict future breaches in different segments
\cite{sarabi2015prioritizing}, and that statistics pertinent to
different user profiles can be applied to estimate success rates of
intrusions \cite{dacier1996models}.

Furthermore, for security assessment, it has been long established that external data sources, such as threat level indicators (e.g. malware numbers) can be helpful to update quantitative risk assessment models \cite{bohme2010security}. Therefore, enabling better usage of available historical data, which may not directly correspond to information about low-level attack events (leaf nodes), will be a valuable enhancement for quantitative analysis of attack trees. 
From this review of the relevant scientific literature, we can conclude that there is a strong need for an approach to perform quantitative analysis on attack trees in case the analyst cannot confidently assign values to all leaf nodes. Furthermore, this approach needs to integrate available historical data that can come in form of values for some abstract attacks (intermediate nodes) or constraints (equalities and inequalities) on combinations of attack tree node values. In the remainder of this paper, we propose such an approach.

\section{Attack-Tree Decoration }
\label{sec-decoration}

In this section we give, to the best of our knowledge, the first formulation of 
the attack-tree decoration problem as a constraint satisfaction problem. We 
start by introducing the necessary attack tree basics. The interested reader 
can find more details about the attack tree formalism in the paper by Mauw and 
Oostdijk \cite{MaOo}.

\subsection{Attack Trees}

In an attack tree the main goal of the attacker is captured by
the root node. This goal is then iteratively refined into subgoals,
represented by the children of the root node. Leaf nodes in 
an attack tree are called \emph{atomic subgoals}, as they are not refined 
any further. 
Non-leaf nodes, instead, can be of two types: disjunctive ($\OR$) or 
conjunctive ($\AND$).  A conjunctive refinement expresses that all subgoals 
must to be achieved in order to succeed on the main goal, while in a 
disjunctive refinement the achievement of a single subgoal is already enough. 

\begin{figure}[!ht]
	\centering
	\includegraphics[width=0.35\textwidth]{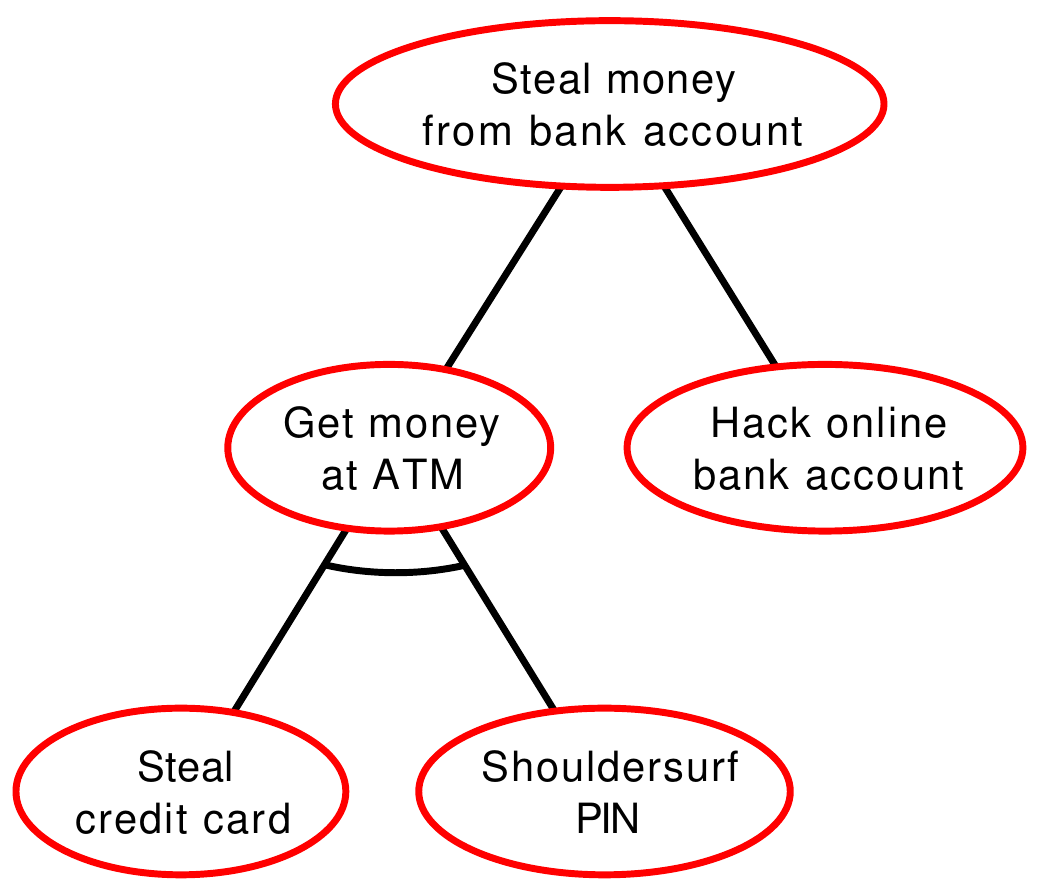}
	\caption{An attack tree representing stealing money from someone's bank 
	account. 
		\label{fig:small_tree}}
\end{figure}

\begin{example}\label{example:small-tree}
Consider the simple attack tree in Figure~\ref{fig:small_tree}. The
root node of this tree represents the main goal of the attack: to
steal money from a bank account. This goal is disjunctively refined
into two alternative sub-attacks: the attacker may try to get money
from an automated teller machine (ATM), \emph{or} they might attempt
to hack the online bank account system. The sub-goal that explores
getting money at an ATM is further conjunctively refined into two
complementary activities: the attacker must steal the credit card of
the victim \emph{and} they also needs to obtain the PIN code by
shoulder-surfing. Note that a conjunctive refinement is denoted
graphically with an arc spanning the child nodes. 
\end{example}

\begin{definition}[Attack tree]\label{def-atree}
	Given a set of labels $\labels$, an attack tree (ATree)
	is constructed according to the following 
	grammar (for $\ell\in\labels$):
	\[\anatree ::= \ell \mid
	\OR(\anatree,\ldots,\anatree)_{\ell}
	\mid
	\AND(\anatree,\ldots,\anatree)_{\ell}
	\text{.}
	\]
	
\end{definition}

Our grammar above slightly differs from the grammar used 
in other notations to represent attack trees~\cite{MaOo,KoMaRaSc_JLC},
as we require every node 
in the tree to be annotated with a label $\ell$. The reason for this is that, 
as opposed
to standard attack tree semantics that focus on the leaf nodes, we render every 
node in the tree equally important. 

To provide a definition of our running 
example we will use shorter labels than those in Figure~\ref{def-atree}. 
The actual mapping between labels should become clear through a quick visual 
inspection.

\[\OR(\AND(\creditCardLabel, \pinLabel)_{\moneyATMLabel}, 
\hackAccountLabel)_{\moneyAccountLabel}.\]

We say that an attack tree has \emph{unique labels} if it does not
contain two distinct nodes with the same label.
We use $\sometermset$ to denote the universe of 
attack trees. We also use the auxiliary functions $\labelFunc \colon 
\sometermset \rightarrow \labels$ and $\labelsFunc \colon 
\sometermset \rightarrow \power{\labels}$ to obtain, respectively,  
the 
root node's label and all labels of a given tree. Formally,  
\begin{itemize}
	\item $\labelFunc(\anatree) 
= \ell \iff 
\anatree \equiv \ell \vee \anatree \equiv                
\OR(\anatree_1,\ldots,\anatree_n)_{\ell} \vee \anatree \equiv
\AND(\anatree_1,\ldots,\anatree_n)_{\ell}$ for some $t_1, \ldots, t_n \in 
\sometermset$
	\item $\labelsFunc(\anatree) = \{\ell\}$ if $\anatree \equiv \ell$, 
	otherwise $\labelsFunc(\anatree) = \{\ell\} \cup \labelsFunc(\anatree_1) 
	\cup \dots \cup \labelsFunc(\anatree_n)$ when $\anatree 
	\equiv                
\OR(\anatree_1,\ldots,\anatree_n)_{\ell} \vee \anatree \equiv
\AND(\anatree_1,\ldots,\anatree_n)_{\ell}$ for some $t_1, \ldots, t_n \in 
\sometermset$

\end{itemize}

For example, given the tree from Fig.~\ref{fig:small_tree}, we have that 
$\labelFunc(\AND(\creditCardLabel, 
\pinLabel)_{\moneyATMLabel}) = \moneyATMLabel$ and 
$\labelsFunc(\AND(\creditCardLabel, 
\pinLabel)_{\moneyATMLabel}) = \{\creditCardLabel, \pinLabel, 
\moneyATMLabel\}$.

\subsection{The Attack-Tree Decoration Problem}

We proceed by formulating the attack-tree decoration problem as a 
constraint satisfaction problem. Intuitively, we map an attack tree to a set of 
boolean expressions whose variables are drawn from the set of labels of the 
tree. Such a set of boolean expressions, defined over a given domain, can be 
seen as a constraint satisfaction problem whose solutions correspond to 
solutions of 
the attack-tree decoration problem. The remainder of this section is dedicated 
to formalising this intuition. 

Decorating an attack tree is a process whereby nodes in the tree are
assigned with values. Given an attack tree $\anatree$, we use a total
function $\valuationFunc\colon \labelsFunc(\anatree) \tfun
\attrdomain$ from labels of the tree to values in a domain
$\attrdomain$ to represent the decoration process, and $\valuationUni$ to 
denote the 
universe of such functions.
To that effect, we 
often refer to labels as variables and to $\valuationFunc$ as a 
\emph{valuation}. 
The co-domain 
$\attrdomain$ of a
valuation is determined by the attribute of the tree under
evaluation. For example, \emph{minimum time of a successful attack} uses the
natural number domain $\mathbb{N}$ to express discrete time, while
\emph{required attacker skill to succeed} typically uses a discrete and 
categorical
domain, such as $\{\textsf{low}, \textsf{medium}, \textsf{high}\}$. 

\begin{definition}[Attribute semantics]
Given an attack tree $\anatree$ and a domain $\domain$, an
\emph{attribute semantics} is a set of valuations with domain
$\labelsFunc(\anatree)$ and co-domain $\domain$.
\end{definition}

A semantics provides an attribute with the set 
of valuations that the 
attribute regards as valid in a given tree. Because defining an attribute 
semantics by exhaustive enumeration of its valuations might be cumbersome, we 
consider in this article attributes whose semantics can be derived from a 
constraint satisfaction 
problem. 

An \emph{attribute constraint} is defined as a boolean 
expression over the set of labels of 
a tree. To that effect, when we use labels in expressions we will consider them 
as variables over a given domain $\domain$. For example, if the attribute 
\emph{minimum 
time taken by 
an attack} is being computed 
over an attack tree of the form $t \equiv \OR(t_1, \ldots, t_k)_{\ell}$, it is 
typically required that $\ell = 
\min(\labelFunc(t_1), 
\ldots, \labelFunc(t_n))$~\cite{kordy2012computational}. The intuition for such 
constraint is that, 
because $\ell$ is disjunctively refined, the minimum
time needed by an attacker to meet the goal $\ell$ is 
considered to be the least time required by any of $\ell$'s children. 

We use predicates as short-hand notations for boolean expressions. For example, 
$\textsf{min-time}(\ell, \ell_1, \cdots, \ell_n)$ can be used to denote the 
boolean expression $\ell = \min(\ell_1, \ldots, \ell_n)$.  
We say that a predicate $p(\ell_1, \ldots, 
\ell_n)$ is valid under interpretation $\valuationFunc$, denoted $p(\ell_1, 
\ldots, \ell_n) \validin \valuationFunc$, if $p(\valuationFunc(\ell_1), \ldots, 
\valuationFunc(\ell_n))$ evaluates to $\texttt{true}$. 
Likewise, a set of predicates $A$ is said to be valid under 
$\valuationFunc$, denoted $A \validin 
\valuationFunc$, if all predicates in $A$ are valid under $\valuationFunc$. 
When it does not lead to confusion, we will often refer to a predicate 
$p(\ell_1, \ldots, 
\ell_n)$ as $p$.

\begin{definition}[Attribute 
constraint-set]\label{def-att-domain}
Given an attack tree $\anatree$ and a domain $\domain$, an
\emph{attribute constraint-set} is a set of predicates $\{p_1, \ldots, p_n\}$
over $\labelsFunc(\anatree)$ whose variables range over $\domain$.
Its semantics is defined by $\thesem{\{p_1, \ldots, p_n\}} =
\{\valuationFunc \in \mathit{VAL}_{\labelsFunc(t)\to\domain} \mid
\{p_1, \ldots, p_n\} \validin \valuationFunc\}$.
\end{definition}

There exist in literature various ways to relate the value of a parent node in 
an attack tree to the values at its children~\cite{kordy2012computational}, 
of which the bottom-up approach is the most common one~\cite{MaOo,KoMaRaSc_JLC}.
This bottom-up approach starts from an assignment of concrete values
to the leaf nodes of the tree and uses two functions (one for disjunctive
refinement and one for conjunctive refinement) to recursively
calculate the value of a parent node from the values of its children.
We will next define how an attribute constraint-set can be recursively
derived from two unranked aggregation operators associated with a
bottom-up approach.
The actual values of the leaf nodes will have to be defined by the
analyst through additional constraints.

\begin{definition}[Bottom-up attribute 
constraint-set]\label{def:bottom-up-attribute-domain}

Let $\anatree$ be an attack tree, and $\vee$ and $\wedge$ 
two unranked function symbols (symbols without fixed arity) with domain 
$\domain$. We use 
$\check{p}(\ell_1, \ldots, \ell_{n+1})$ to denote the boolean expression 
$\ell_1 
= 
\vee(\ell_2, \ldots, \ell_{n+1})$. Similarly, the predicate 
$\hat{p}(\ell_1, \ldots, \ell_{n+1})$ denotes the boolean expression $\ell_1 
= \wedge(\ell_2, \ldots, \ell_{n+1})$. The \emph{bottom-up attribute 
constraint-set} $P(t)$ of
$\anatree$ is recursively computed as follows:
	\begin{itemize}
    \item If $\anatree \equiv \ell$ for some label $\ell$, then
		$P(t)= \emptyset$.
		\item If $\anatree \equiv \OR(\anatree_1, \ldots, 
		\anatree_n)_{\ell}$ 
		with 
		$\labelFunc(t_i) = \ell_i$ for $i \in \{1, \ldots, n\}$, then 
		$P(t) = P(t_1) \cup \cdots \cup P(t_n) \cup \{\check{p}(\ell, 
		\ell_1, \ldots, \ell_n) \}$;
		\item If $\anatree \equiv \AND(\anatree_1, \ldots, 
		\anatree_n)_{\ell}$ with 
		$\labelFunc(t_i) = \ell_i$ for $i \in \{1, \ldots, n\}$, then 
		$P(t) = P(t_1) \cup \cdots \cup P(t_n) \cup \{\hat{p}(\ell, \ell_1, 
		\ldots, \ell_n \})$.
	\end{itemize}
\end{definition}

Definition~\ref{def:bottom-up-attribute-domain} is based on the standard 
bottom-up approach~\cite{MaOo,KoMaRaSc_JLC} for attack trees where child nodes 
are aggregated 
together based on two functions: $\wedge$ for children of a
conjunctive refinement
and $\vee$ for children of a disjunctive refinement.  In literature there exist 
concrete definitions of $\wedge$ and $\vee$ for various attributes. For 
example, when computing probability of success it is usually considered that 
$\wedge = \times$ and $\vee = +$, for cost $\wedge = +$ and $\vee = 
\min$, and for minimum time $\wedge = \max$ and $\vee = 
\min$. 

\begin{definition}[The attack-tree decoration 
problem]\label{def-decoration-problem}
Given an attack tree $\anatree$ and an attribute constraint-set $\{p_1, \ldots, 
p_n\}$ for $\anatree$, the \emph{attack-tree decoration problem}
consists in finding a valuation in $\thesem{\{p_1, \ldots, p_n\}}$.
\end{definition}

The attack-tree decoration problem corresponds to the well-known
\emph{Constraint Satisfaction Problem} (CSP)~\cite{T1993}, where a
solution is a valuation that satisfies a set of 
constraints. Finding a solution or even deciding whether there exists a 
solution for CSP is a well-known and complex computational problem.

We say that the attack-tree decoration problem is:
\begin{itemize}
	\item \emph{Determined:} If the cardinality of $\thesem{\{p_1, \ldots, 
	p_n\}}$ is one, i.e.\ there exists a single valid valuation only.
	\item \emph{Inconsistent:} If $\thesem{\{p_1, \ldots, 
	p_n\}} = \emptyset$, i.e.\ there does not exist a valid valuation.
	\item \emph{Undetermined:} If the cardinality of $\thesem{\{p_1, \ldots, 
	p_n\}}$ is larger than one, i.e.\ the problem is neither inconsistent nor 
	determined.
\end{itemize}

We illustrate these concepts with the following example that utilises a subtree 
from our running example. Consider the 
tree $\anatree$ depicted in Figure~\ref{tree1} whose set of labels is
\[
\labelsFunc(\anatree)=\{\moneyAccountLabel, \moneyATMLabel,
  \hackAccountLabel\}\text{.}
\]

\begin{figure*}[htp]
\centering

\subfloat[An example of 
undeterminism. 
\label{tree1}]{\includegraphics[width=0.3\textwidth]{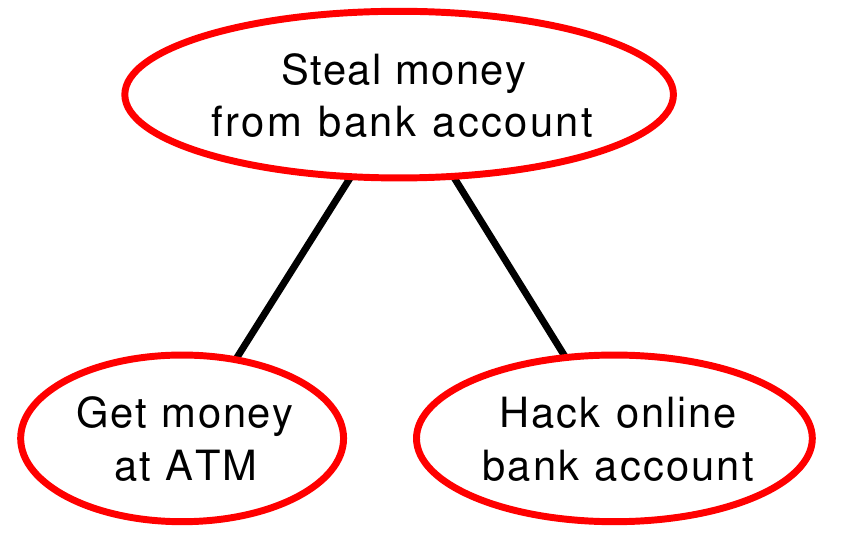}}
\hfill 
\subfloat[An example of 
inconsistency. 
\label{tree2}]{\includegraphics[width=0.3\textwidth]{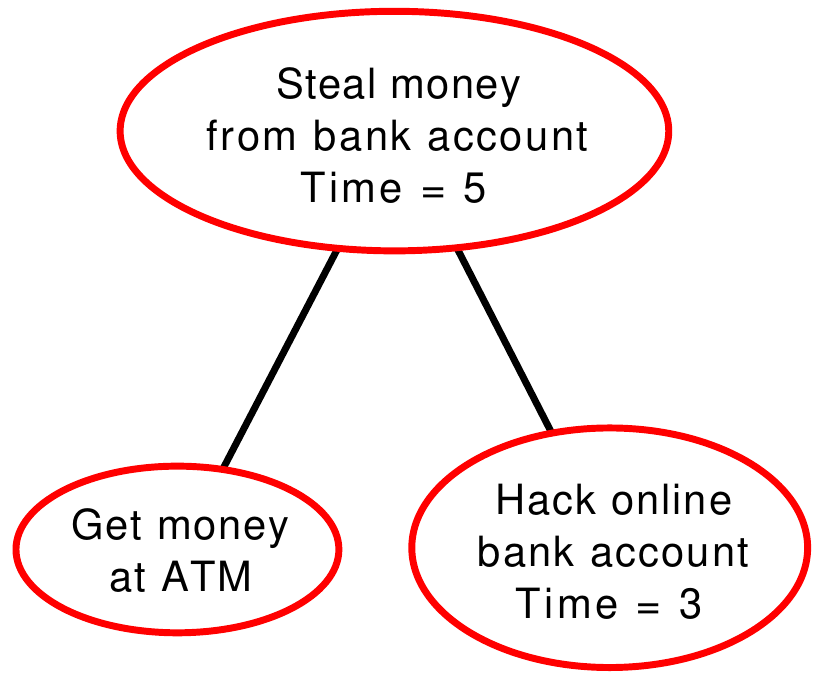}}
\hfill
\subfloat[A determined attribute 
domain. 
\label{tree3}]{\includegraphics[width=0.3\textwidth]{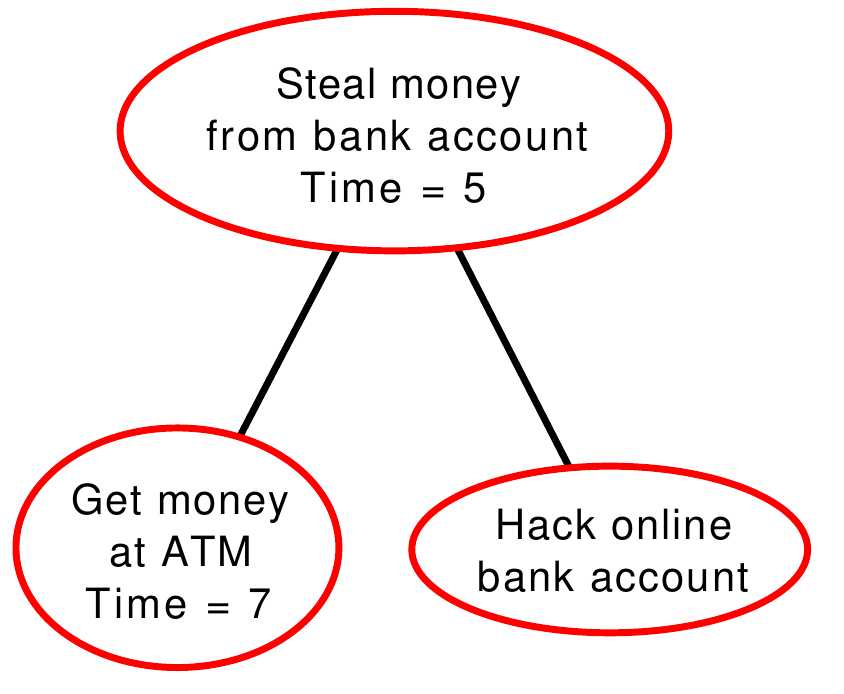}}
\caption{}
\end{figure*}

\noindent
Further, we consider subsets of the following set of constraints:
\[
\begin{array}{lll}
p_1 & :=&  \moneyAccountLabel = \min(\moneyATMLabel, \hackAccountLabel)\\
p_2 & :=&  \moneyAccountLabel = 5\\
p_3 & :=&  \hackAccountLabel = 3\\
p_4 & :=&   \moneyATMLabel = 7\\
\end{array}
\]
Predicate $p_1$ follows from
the standard interpretation of minimum attack time in an attack 
tree~\cite{KoMaSc-2012}, where $\wedge = \max$ and $\vee = \min$ defined over 
the natural numbers $\mathbb{N}$. This leads to the bottom-up 
attribute constraint-set $\{p_1\}$.
The attack-tree decoration problem with this attribute constraint-set
is clearly undetermined given that, for example,  the valuations 
$\{\moneyAccountLabel \mapsto 3, \moneyATMLabel \mapsto 3, \hackAccountLabel 
\mapsto 5\}$ and 
$\{\moneyAccountLabel \mapsto 3, \moneyATMLabel \mapsto 3, \hackAccountLabel 
\mapsto 4\}$ both satisfy $p_1$. 

Now, consider predicates $p_2$, $p_3$ and $p_4$.
This type of boolean expressions represent variable assignments. We
observe that the attribute constraint-set $\{p_1, p_2, 
p_3\}$ leads to an inconsistent decoration problem (see 
Figure~\ref{tree2}), while  
the decoration problem defined by $\{p_1, p_2, 
p_4\}$ is determined as there exists a unique 
valuation satisfying all three predicates, 
namely $\{\moneyAccountLabel \mapsto 5, \moneyATMLabel \mapsto 7, 
\hackAccountLabel \mapsto 5\}$ (see 
Figure~\ref{tree3}). Finally, we remark that the attribute constraint-set 
$\{p_1, p_3, p_4\}$ also leads to a decoration problem that is 
determined. Moreover, it corresponds to the standard bottom-up calculation in 
attack trees. 

This last example illustrates the general observation that, given an
assignment of values to the leafs of an attack tree and a bottom-up
attribute constraint-set (Def.~\ref{def:bottom-up-attribute-domain}),
the attack-tree decoration problem is determined. This is formalized
in the following proposition, which can be easily proved by induction
on the structure of the tree.

\begin{proposition}\label{prop:bottom-up-determined}
Let $\anatree$ be an attack tree with unique labels, let
$L=\{l_1,\ldots,l_n\}$ be the set of labels of its leaf nodes and let
$D$ be a domain.
Let $P_L = \{l_1=v_1,\ldots,l_n=v_n\}$ be a set of constraints
assigning values $v_1,\ldots,v_n\in D$ to the leaf nodes and
let $P(\anatree)$ be the bottom-up attribute constraint-set of $\anatree$.
Then the attack-tree decoration problem for $\anatree$ and
constraint-set $P_L \cup P(\anatree)$ is determined.
\end{proposition}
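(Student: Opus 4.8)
The plan is to prove the statement by structural induction on the attack tree $\anatree$, establishing a slightly stronger inductive invariant: namely that for every attack tree $\anatree$ with unique labels, leaf-label set $L$, and a set $P_L$ of constraints assigning a fixed value to each leaf, the decoration problem for $P_L \cup P(\anatree)$ has \emph{exactly one} solution, and moreover this unique valuation $\valuationFunc$ assigns to the root label $\labelFunc(\anatree)$ the value obtained by the usual bottom-up recursion applied to the leaf values. Carrying the explicit formula for the root value through the induction is what makes the inductive step go through cleanly, even though the proposition as stated only asks for uniqueness.

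The base case is when $\anatree \equiv \ell$ for a single label $\ell$: here $L=\{\ell\}$, $P(\anatree)=\emptyset$, and $P_L=\{\ell = v\}$ for some $v\in D$, so the only valuation on $\labelsFunc(\anatree)=\{\ell\}$ satisfying the constraint-set is $\{\ell\mapsto v\}$, which is trivially unique and has root value $v$. For the inductive step, suppose $\anatree \equiv \OR(\anatree_1,\ldots,\anatree_n)_{\ell}$ (the $\AND$ case is identical with $\wedge$ in place of $\vee$). Because $\anatree$ has unique labels, the label sets $\labelsFunc(\anatree_i)$ are pairwise disjoint and none of them contains $\ell$; consequently the leaf-label set $L$ of $\anatree$ partitions into the leaf-label sets $L_i$ of the $\anatree_i$, and $P_L$ partitions correspondingly into sets $P_{L_i}$. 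By Definition~\ref{def:bottom-up-attribute-domain}, $P(\anatree) = P(\anatree_1)\cup\cdots\cup P(\anatree_n)\cup\{\check p(\ell,\ell_1,\ldots,\ell_n)\}$ where $\ell_i=\labelFunc(\anatree_i)$. A valuation $\valuationFunc$ on $\labelsFunc(\anatree)$ satisfies $P_L\cup P(\anatree)$ if and only if its restriction to each $\labelsFunc(\anatree_i)$ satisfies $P_{L_i}\cup P(\anatree_i)$ \emph{and} $\valuationFunc(\ell) = \vee(\valuationFunc(\ell_1),\ldots,\valuationFunc(\ell_n))$; the disjointness of labels is exactly what lets us glue the restrictions back together without conflict and is the only place the unique-labels hypothesis is used. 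By the induction hypothesis each restriction is uniquely determined and, in particular, $\valuationFunc(\ell_i)$ is pinned to the bottom-up value of $\anatree_i$; the extra constraint then forces $\valuationFunc(\ell)$ to a single value, so $\valuationFunc$ is unique on $\labelsFunc(\anatree)$ and the strengthened invariant is preserved.

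The only subtle point — and hence the main obstacle to watch — is the bookkeeping around disjointness of labels. One must check that $\ell\notin\labelsFunc(\anatree_i)$ for every $i$ and that $\labelsFunc(\anatree_i)\cap\labelsFunc(\anatree_j)=\emptyset$ for $i\neq j$, both of which follow from the unique-labels assumption together with the definition of $\labelsFunc$, so that the partition of $P_L$ and the gluing of sub-valuations are well defined and a sub-valuation's value on $\ell_i$ is genuinely determined by $P_{L_i}\cup P(\anatree_i)$ alone rather than by constraints living in a sibling subtree. Everything else is routine: the recursion in Definition~\ref{def:bottom-up-attribute-domain} mirrors the tree structure exactly, and the single fresh predicate added at each internal node is functional in its first argument once the remaining arguments are fixed.
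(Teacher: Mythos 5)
Your proof is correct and follows exactly the route the paper indicates: the paper only remarks that the proposition ``can be easily proved by induction on the structure of the tree,'' and your structural induction---with the strengthened invariant pinning the root value to the bottom-up result and the careful use of the unique-labels hypothesis to partition $P_L$ and glue sub-valuations---is a faithful, fully worked-out version of that argument.
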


\section{A Methodology for Attack-Tree Decoration}
\label{sec-methodology}

As indicated above, our approach extends the rather rigid bottom-up
way in which attack trees are currently decorated.
Our methodology consists of two main steps that complement each other:
(1) generation of the attribute constraint-set and (2) analysis of valid
valuations. The former boils down to the definition of predicates over
the set of labels of a tree. We make a distinction between two types
of predicates: hard predicates and soft predicates.

\subsection{Hard Predicates}

Hard predicates are derived from the attack tree refinement structure rather 
than from knowledge databases or an expert's opinion. This choice
establishes that all predicates derived 
from the attack tree structure should be satisfied, as otherwise the attribute 
semantics and the tree contradict each other. 
The term \emph{hard}
stems from the notion of hard and soft constraints in satisfaction
problems. Soft constraints represent desirable properties, while hard
constraints are a must.

In this article, we consider hard predicates those contained in the bottom-up 
attribute constraint-set of the tree (see 
Def.~\ref{def:bottom-up-attribute-domain}). 
This is a conservative 
choice that allows us to extend existing bottom-up quantification methods based 
on the refinement
relation of the tree~\cite{MaOo,KoMaRaSc_JLC}. In fact, it
follows from Prop.~\ref{prop:bottom-up-determined}
that, if all labels in a tree are unique, then the resulting attack-tree 
decoration problem based on this bottom-up 
attribute constraint-set is either undetermined or determined. 
We remark, nonetheless, that our 
methodology can also be used to model other computational approaches 
such as the Bayesian reasoning proposed by Kordy, Pouly, and 
Schweitzer~\cite{KoPoSc_iFM14}. It is ultimately the analyst who decides what 
constitutes a set of hard predicates, although we require the analyst to come 
up 
with 
hard 
predicates that are satisfiable; as we do in this article.

\subsection{Soft Predicates}

Statistical data and constraints extracted from industry-relevant 
knowledge-bases and experts are too valuable to ignore. And our methodology  
treats them as first-class citizens. As usual, we encode this information in 
predicates. For example, assume that comprehensive empirical data 
indicates that the \emph{probability of a bank account being hacked} is less 
than $0.01$. The semantics of such attribute in our running example tree can be 
defined by a set containing the predicate 
$\texttt{hack-prob}(\hackAccountLabel) := 
\hackAccountLabel \leq 0.01$.

In our methodology, predicates obtained from experts and knowledge-bases are 
regarded as soft. The reason is that, when it comes to opinion and empirical 
data, 
inconsistencies are common. 
Hence we do allow these predicates to be violated 
up 
to some extent. For example, consider that for a particular attack tree we 
obtain that the 
probability that an account is hacked is $0.02$. Although such an outcome 
violates the predicate $\texttt{hack-prob}(\hackAccountLabel)$, one may find it 
acceptable and not \emph{far} from the considered empirical data.

\subsection{Analysis of Attribute Semantics with Hard and Soft Predicates}

Given an attack tree $\anatree$ and attribute constraint-set $\{p_1, \ldots, 
p_n\}$ over labels of $\anatree$ and domain $\domain$, we use $H(t)$ and $S(t)$ 
to denote the 
partition of $\{p_1, \ldots, 
p_n\}$ into 
hard and soft predicates, respectively.
As described in the previous section, we analyse an attribute constraint-set by 
looking at solutions of the corresponding constraint satisfaction problem. 
Formally, given an attribute constraint-set $H(t) \cup S(t)$, 
we aim 
at finding a 
valuation $\valuationFunc \in \valuationUni$ such that $H(t) \cup S(t) \validin 
\valuationFunc$. 
However, such formulation makes no distinction between 
hard and soft predicates, which is a feature we regard important in our 
methodology. For example, it may be the case that no valuation $\valuationFunc$ 
satisfying $H(t) \cup S(t) \validin \valuationFunc$ exists, while we can still 
find  
$\valuationFunc' \in \valuationUni$ such that $H(t) \validin \valuationFunc'$. 
Note 
that, although the constraint-set $H(t)$ is an oversimplification of the 
original attribute constraint-set with all soft constraints being removed, 
$\valuationFunc'$ 
satisfies all hard constraints and thus may be worth considering. 

In our methodology, when the original attack-tree decoration problem has no 
solution we propose to solve a weaker version: \emph{the relaxed
attack-tree decoration problem}. This new problem allows soft predicates to be 
weakened, which consists in replacing any soft predicate $p \in S(t)$ by a  
predicate $p'$ that logically follows from $p$. 
We define this type of entailment on predicates over an attack tree
$\anatree$ by:
\begin{align*}
& p\implies p' \text{~~if and only if~~} \\ 
& \quad \forall\valuationFunc\in\valuationUni\colon p \validin \valuationFunc
\implies p' \validin \valuationFunc\text{.}
\end{align*}

Using this notation, we can define the notion of a weakening relation
on sets of predicates.

\begin{definition}[Weakening relation]\label{def:weakening}
Let $\weak$ be a partial order on sets of predicates.
Then we say that $\weak$ is a \emph{weakening relation} if and only if
for all sets of predicates $P$ and $P'$ it holds that
\[P'\weak P \implies
  (\forall p'\in P'~ \exists p\in P\colon p\implies p')\text{.}
\]

\end{definition}

\noindent
If $P'\weak P$, we say that $P'$ is a weakening of $P$ under the
weakening relation $\weak$.
We provide three examples of weakening relations.
\begin{enumerate}
\item Set equality ($=$), which is the trivial weakening relation.
\item Set inclusion ($\subseteq$), which allows one to weaken a set of
predicates by deleting one or more of its elements.
\item The maximal weakening relation ($\maxweak$), which is defined by
\[P'\maxweak P \iff
  (\forall p'\in P'~ \exists p\in P\colon p\implies p')\text{.}
\]
\end{enumerate}
The proofs that these are indeed weakening relations and that
$\maxweak$ is maximal are straightforward.

Using this notion of a weakening relation we reformulate the
attack-tree decoration problem as an optimisation problem in the 
following way.

\begin{definition}[The relaxed attack-tree decoration problem]
	\label{def:relaxed_decoration_problem}
	Let 
	$\anatree$ be an attack tree and $H(t) \cup S(t)$ an 
	attribute constraint-set over $\labelsFunc(t)$ and domain 
	$\domain$, where $H(t)$ and $S(t)$ are hard and soft predicates, 
	respectively. Let $\weak$ be a weakening relation. 
	The \emph{relaxed 
		attack-tree decoration problem} consists of two stages: 
\begin{enumerate}
	\item Finding a set of predicates 
		$S$ over $\labelsFunc(t)$ and domain $\domain$ such that:
	\begin{itemize}
		\item $S \weak S(t)$,
		\item $ \thesem{H(t) \cup S}  \neq \emptyset$, and
		\item $\forall S' \colon S \weak  S' \weak S(t) 
 \implies (\thesem{H(t) \cup S'} = \emptyset \vee S' \weak S)
 $.
	\end{itemize}
	\item Solving the attack tree decoration problem 
	with constraint-set  $H(t) 
	\cup S$.
\end{enumerate}
A solution is a pair $(S, \valuationFunc)$, such that $\valuationFunc \in  
\thesem{H(t) \cup S}$. 

\end{definition}

The choice of the weakening relation is relevant in an
instantiation of 
the relaxed attack tree decoration problem, as we show in the next 
section. 
In particular, we analyse two relevant decoration problems resulting
from two concrete weakening relation definitions, namely the set
inclusion ($\subseteq$) and maximal weakening relation ($\maxweak$).  

\section{Decoration Algorithm for Specific Classes of Predicates}
\label{sec:sec5}

Solving a constraint satisfaction problem is in general NP-hard. Thus, this 
section is devoted to instantiating each component of the developed theory 
into concrete predicate languages that can be used in standard solver tools to 
find solutions for the relaxed attack-tree decoration problem. In 
Section~\ref{sec:empirical} below we show how those instantiations of the 
theory can be used to analyse a comprehensive attack tree case study. 

\subsection{Maximal weakening over inequality relations}

Here we address the question of whether there exists a meaningful predicate 
language and constraint satisfaction solver that can be used to solve the 
relaxed attack tree decoration problem with respect to the maximal weakening 
relation. Note that, among all possible weakening relations the maximal one is 
the less restrictive. Hence it leads to more fine-grained solutions than other 
weakening relations. 

The chosen predicate language defines predicates of three types,
all based on comparing one or two labels to a constant value.
The three types of predicates are:
\begin{enumerate}
	\item $\ell\le a$
	\item $\ell\ge a$
	\item $\ell\le\ell' + a$
\end{enumerate}
where $\ell$ and  $\ell'$ are labels and $a$ is a real number (positive or 
negative). We denote the set of all such predicates by $\ineqpred$ and
we will often use $\predType{a}\in\ineqpred$ to denote a predicate in this set
with constant value $a$.
It is easy to verify that $\{p\}\maxweak \{p'\}$ can only hold if
predicates $p$ and $p'$ are of the same type.

\begin{lemma}
Let $p,p'\in\ineqpred$ be two predicates, then
$\{p\}\maxweak \{p'\}$ implies that, for some labels $\ell_1$,
$\ell_2$, and some real numbers $a$, $a'$,
\[
\begin{array}{lll}
p\equiv \ell_1\le a &\land& p'\equiv \ell_1\le a' \text{,~or}\\
p\equiv \ell_1\ge a &\land& p'\equiv \ell_1\ge a' \text{,~or}\\
p\equiv \ell_1\le\ell_2 + a &\land& p'\equiv \ell_1\le\ell_2 + a'\text{.}\\
\end{array}
\]
\end{lemma}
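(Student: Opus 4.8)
The plan is to argue by cases on the three possible types of the predicate $p'\in\ineqpred$, and in each case use the definition of $\maxweak$ together with the definition of logical entailment $\implies$ on predicates to pin down the shape of $p$. Recall that $\{p\}\maxweak\{p'\}$ means precisely that for every $p'\in\{p'\}$ there is some $p\in\{p\}$ with $p\implies p'$; since both sets are singletons this simply says $p\implies p'$, i.e.\ every valuation satisfying $p$ also satisfies $p'$. So the entire content of the lemma is: \emph{if $p\implies p'$ for $p,p'\in\ineqpred$, then $p$ and $p'$ have the same type and refer to the same label(s).}

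\textbf{Case analysis.} First I would treat the case $p'\equiv \ell_1\le a'$. Suppose for contradiction that $p$ is of type $\ell\ge a$ or of type $\ell\le\ell''+a$, or of type $\ell\le a$ but with $\ell\neq\ell_1$. In each situation I would exhibit a valuation $\valuationFunc$ satisfying $p$ but violating $\ell_1\le a'$ — e.g.\ if $p\equiv\ell\ge a$ then choose $\valuationFunc(\ell)$ huge and, if $\ell\neq\ell_1$, also $\valuationFunc(\ell_1)$ huge; if $\ell=\ell_1$ then $p\equiv\ell_1\ge a$ and picking $\valuationFunc(\ell_1)=\max(a,a')+1$ satisfies $p$ but not $p'$. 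This contradicts $p\implies p'$, forcing $p\equiv\ell_1\le a$ for the appropriate real $a$. The case $p'\equiv\ell_1\ge a'$ is symmetric (flip all inequalities and send values to $-\infty$). For the case $p'\equiv\ell_1\le\ell_2+a'$: if $p$ mentions a label not in $\{\ell_1,\ell_2\}$ or is of one of the unary types, I would again construct a valuation satisfying $p$ that makes $\valuationFunc(\ell_1)$ large and $\valuationFunc(\ell_2)$ small (or vice versa as needed), violating $p'$; hence $p$ must be of the form $\ell_1\le\ell_2+a$. One should also rule out $p\equiv\ell_2\le\ell_1+a$, again by a valuation with $\valuationFunc(\ell_1)\gg\valuationFunc(\ell_2)$.

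\textbf{Main obstacle.} The routine part is the arithmetic of choosing witnessing valuations; the slightly delicate point is being systematic about \emph{all} the sub-cases — in particular, remembering that when $p$ and $p'$ are of the same unary type one still must check they involve the \emph{same} label, and that in the binary case the two labels must appear in the same order (domination direction matters). A secondary subtlety is that the domain $\domain$ over which these valuations range is left implicit; the argument tacitly assumes the domain is rich enough (e.g.\ an unbounded ordered set such as $\mathbb{R}$ or $\mathbb{N}$) that the witnessing valuations exist. I would state this assumption explicitly at the start of the proof. Once the case split is organised, each case closes with a one-line witness, so no lengthy computation is needed; the burden is purely bookkeeping of cases.
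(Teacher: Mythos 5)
Your overall strategy is the right one: the paper offers no explicit proof of this lemma (it is introduced with ``it is easy to verify that\dots''), and the intended verification is exactly the case analysis you describe, closing each mismatched pair of predicate types with a witness valuation. Your witnesses are correct over an unbounded domain, and your caveat about the domain being rich enough is legitimate and worth stating: over a bounded domain such as $[0,1]$ (the probability domain of the case study) a predicate like $\ell_1\ge 0$ is a tautology and is therefore implied by every predicate, so the lemma fails as stated; the degenerate binary predicate $\ell\le\ell+a$ behaves similarly and should also be excluded.

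There is, however, one concrete slip: you unpack $\{p\}\maxweak\{p'\}$ as $p\implies p'$, but Definition~\ref{def:weakening} quantifies universally over the \emph{left} set and existentially over the \emph{right} set, so $\{p\}\maxweak\{p'\}$ in fact means $p'\implies p$ (the left-hand set is the weakening of the right-hand one). You can confirm the direction from Lemma~\ref{lem-pred-rel}: the equivalence $\{\ell_1\le a\}\maxweak\{\ell_1\le a'\}\iff a\ge a'$ is only consistent with the reading $(\ell_1\le a')\implies(\ell_1\le a)$. For the present lemma the conclusion is symmetric in $p$ and $p'$, so your argument survives verbatim after swapping the roles of the two predicates (case on the type of $p$, and exhibit a valuation satisfying $p'$ but violating $p$); but the misreading is not harmless in general --- it reverses every inequality in Lemma~\ref{lem-pred-rel} and would derail the distance argument in Theorem~\ref{theo-main} --- so fix the direction before going further.
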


The three types of predicates have been chosen in such a way that the
maximal weakening relation $\maxweak$ on single predicates can be 
easily characterised by the numerical order of their constant values $a$.
This characterisation will allow us later to define the distance
between two predicates as the difference between their constant values
$a$ and $a'$.

\begin{lemma}\label{lem-pred-rel}
Let $\ell_1$, $\ell_2$ be labels and $a$, $a'$ be real numbers. Then
the following properties hold.
\[
\begin{array}{cll}
\{\ell_1\le a\} \maxweak \{\ell_1\le a'\} &\iff& a\ge a'\\
\{\ell_1\ge a\} \maxweak \{\ell_1\ge a'\} &\iff& a\le a'\\
\{\ell_1\le\ell_2 + a\} \maxweak \{\ell_1\le\ell_2 + a'\} &\iff& a\ge a'\\
\end{array}
\]
\end{lemma}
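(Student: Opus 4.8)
The plan is to prove each of the three biconditionals in Lemma~\ref{lem-pred-rel} directly from the definitions of $\maxweak$ and of logical entailment $\implies$ between predicates. Recall that for single-element sets $\{p'\}\maxweak\{p\}$ unfolds (via Definition~\ref{def:weakening} and the maximal weakening relation) to simply $p\implies p'$, and that $p\implies p'$ means every valuation satisfying $p$ also satisfies $p'$. So each claim reduces to a statement about when one inequality-predicate is logically implied by another of the same type over the real-valued domain.

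First I would handle the case $\{\ell_1\le a\}\maxweak\{\ell_1\le a'\}$. By the unfolding above this is equivalent to $(\ell_1\le a')\implies(\ell_1\le a)$, i.e.\ for every valuation $\valuationFunc$, $\valuationFunc(\ell_1)\le a'$ entails $\valuationFunc(\ell_1)\le a$. The ``$\Leftarrow$'' direction is immediate: if $a\ge a'$ then $\valuationFunc(\ell_1)\le a'$ gives $\valuationFunc(\ell_1)\le a$ by transitivity of $\le$ on $\mathbb{R}$. For ``$\Rightarrow$'', I argue by contraposition: if $a<a'$, pick any value $v$ with $a<v\le a'$ (such $v$ exists in $\mathbb{R}$, e.g.\ $v=a'$) and the valuation $\valuationFunc$ sending $\ell_1$ to $v$ and all other labels to anything; then $\valuationFunc$ satisfies $\ell_1\le a'$ but not $\ell_1\le a$, contradicting the entailment. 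The second line, $\{\ell_1\ge a\}\maxweak\{\ell_1\ge a'\}\iff a\le a'$, is the exact mirror image: the entailment direction flips and so does the order, and the same two-line argument (transitivity for one direction, a separating valuation for the other) applies.

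The third line, $\{\ell_1\le\ell_2+a\}\maxweak\{\ell_1\le\ell_2+a'\}\iff a\ge a'$, is the only one involving two labels, but the argument is structurally the same. For ``$\Leftarrow$'', if $a\ge a'$ then $\valuationFunc(\ell_1)\le\valuationFunc(\ell_2)+a'\le\valuationFunc(\ell_2)+a$. For ``$\Rightarrow$'', if $a<a'$, I exhibit a valuation with $\valuationFunc(\ell_2)=0$ (or any fixed real) and $\valuationFunc(\ell_1)$ chosen strictly between $\valuationFunc(\ell_2)+a$ and $\valuationFunc(\ell_2)+a'$, which satisfies the weaker predicate but not the stronger one. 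One small subtlety worth stating explicitly is that $\maxweak$ on singletons only relates predicates of the same type — this is the remark already made in the excerpt just before the previous lemma, so here I may simply invoke it and need not re-examine cross-type cases.

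I do not expect any real obstacle: the lemma is essentially a dictionary translating the abstract order $\maxweak$ into arithmetic comparisons, and every direction is either a one-step transitivity argument or a one-line construction of a separating valuation in $\mathbb{R}$. The mildest care-point is making sure the separating valuation is a legitimate element of $\valuationUni$ (a total function on $\labelsFunc(t)$), which is handled by extending the partial assignment on $\ell_1,\ell_2$ arbitrarily on the remaining labels; and making sure the intermediate real value used in the contrapositive actually exists, which holds because the domain is $\mathbb{R}$ (the density of $\mathbb{R}$, or just picking the endpoint $a'$, suffices). I would therefore present the proof as three short paragraphs, one per line, each split into the two implication directions, or even more compactly note that lines one and three are proved identically and line two by symmetry.
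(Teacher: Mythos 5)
Your proof is correct: unfolding $\{p\}\maxweak\{p'\}$ to the single entailment $p'\implies p$ and then settling each biconditional by transitivity of $\le$ in one direction and a separating valuation (e.g.\ $\ell_1\mapsto a'$) in the other is exactly the routine argument the paper has in mind — it states Lemma~\ref{lem-pred-rel} without proof, treating it as immediate. Your care-points about extending the valuation to a total function on $\labelsFunc(t)$ and about the existence of the separating value in the real domain are sensible and do not change the substance.
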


From this lemma it follows, for instance, that for $S,S'\subseteq\ineqpred$,
if $S\maxweak S'$ and $(\ell_1\le a)\in S$, then there exists
$(\ell_1\le a')\in S'$, such that $a\ge a'$. Hence we consider the set 
$\maps{S}{S'}$ containing all \emph{total} 
functions 
$f\colon S \rightarrow S'$ such that 
$\forall p 
\in S \colon  f(p) \implies p$. 
The Euclidean distance between two predicate $\predType{a}, 
\predType{a'}\in\ineqpred$ is given by $d(\predType{a}, \predType{a'}) = 
|a-a'|$ if $\predType{a}$ and $\predType{a'}$ are of the same type, 
$d(\predType{a}, \predType{a'}) = \infty$ otherwise. Given $f \in 
\maps{S}{S'}$, we define $\distance_f(S, S') = \sqrt{\sum_{p \in S} d^2(p, 
f(p))}$, and the Euclidean 
distance $d(S, S')$ between two sets of predicates by

\[
d(S,S') = 
	\left\{ 
  		\begin{array}{l l}
			\infty\text{,} & \text{if } \maps{S}{S'} = \emptyset \\
			\min_{f \in \maps{S}{S'}} \distance_f(S, S') \text{,} & 
			\text{otherwise}
  		\end{array}
	\right.
\]

We restrict the distance measure above to bijective functions only. That is to 
say, we consider from now on $\maps{S}{S'}$ to be the set containing all 
\emph{bijective} functions $f\colon S \rightarrow S'$ such that 
$\forall p \in S \colon  f(p) \implies p$. A consequence of such restriction is 
that predicate sets with different cardinality have distance $\infty$, which 
simplifies the proof of Theorem~\ref{theo-main} below. 

Next we provide sufficient conditions for a set of predicates to be part of a 
solution of the relaxed attack-tree decoration problem. It states that a set 
$S$, which minimizes its distance $\metric(S, S(t))$ to $S(t)$, where $S(t)$ is 
the set of soft 
predicates for a given tree $\anatree$, and that satisfies $\thesem{H(t) \cup 
S} \neq \emptyset$, where $H(t)$ is the set of hard predicates, leads to a 
solution of the relaxed attack-tree decoration problem.

\begin{theorem}\label{theo-main}
 Let $\anatree$ be an attack tree and $H(t) \cup S(t)$ an 
 attribute constraint-set for $\anatree$, where $S(t)\subseteq\ineqpred$.
 Let $S\subseteq\ineqpred$ be a set of predicates such that 
 $\thesem{H(t) \cup S} \neq \emptyset$ and 
$\metric(S, 
 S(t))$ is minimum and defined, i.e.\ $\metric(S, 
 S(t)) \neq \infty$. Then there exists a valuation $\alpha \in \thesem{H(t) 
 \cup 
 S}$ such 
 that $(S, 
 \alpha)$ is a solution of the relaxed attack-tree 
 decoration problem with respect to $\maxweak$. 
\end{theorem}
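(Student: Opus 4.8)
The plan is to show that the given $S$ satisfies the three conditions that stage~1 of Definition~\ref{def:relaxed_decoration_problem} imposes on the first component of a solution for the weakening relation $\maxweak$, and then to take any $\alpha\in\thesem{H(t)\cup S}$ for stage~2. Two of the three conditions are immediate. Since $\thesem{H(t)\cup S}\neq\emptyset$ by hypothesis, the condition $\thesem{H(t)\cup S}\neq\emptyset$ holds and a suitable $\alpha$ exists. For the condition $S\maxweak S(t)$, observe that $\metric(S,S(t))\neq\infty$ forces $\maps{S}{S(t)}\neq\emptyset$, so there is a bijection $f\colon S\to S(t)$ with $f(p)\implies p$ for every $p\in S$; reading this as ``each $p\in S$ is implied by the element $f(p)$ of $S(t)$'' is exactly $S\maxweak S(t)$. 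I would fix such an $f$ that additionally attains the minimum $\metric(S,S(t))=\sqrt{\sum_{p\in S}d(p,f(p))^2}$, which exists because $\maps{S}{S(t)}$ is finite. The whole argument therefore reduces to the third condition, and this is the substantive part.

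I would prove the third condition by contradiction with the minimality of $\metric(S,S(t))$. Suppose there were a predicate set $S'$ (over $\ineqpred$, as is natural in this section) with $S\maxweak S'\maxweak S(t)$, with $\thesem{H(t)\cup S'}\neq\emptyset$, and with $S'\notmaxweak S$. From $S'\notmaxweak S$ choose a witness $\bar p\in S'$ such that no $p\in S$ satisfies $p\implies\bar p$; from $S'\maxweak S(t)$ choose $\bar q\in S(t)$ with $\bar q\implies\bar p$. By Lemma~\ref{lem-pred-rel} together with the observation (stated just before it) that implication between predicates of $\ineqpred$ forces the same type, $\bar p$ and $\bar q$ have the same type, and so does $p^*:=f^{-1}(\bar q)\in S$, since $\bar q=f(p^*)\implies p^*$. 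Note also $\bar p\notin S$, since otherwise $p:=\bar p$ would witness $p\implies\bar p$. Now form the candidate
\[
\tilde S \;:=\; \bigl(S\setminus\{p^*\}\bigr)\cup\{\bar p\},
\]
which has the same cardinality as $S$ and is still contained in $\ineqpred$.

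It then remains to check that $\tilde S$ contradicts the choice of $S$, i.e.\ that it is satisfiable together with $H(t)$ and strictly closer to $S(t)$. For satisfiability, the key point is that $\tilde S\maxweak S'$: every element of $S\setminus\{p^*\}$ is implied by some element of $S'$ because $S\maxweak S'$, and $\bar p$ is implied by itself. Hence every valuation satisfying $S'$ also satisfies $\tilde S$, so $\thesem{H(t)\cup S'}\subseteq\thesem{H(t)\cup\tilde S}$ and the latter is nonempty. For the distance, consider the bijection $\tilde f\colon\tilde S\to S(t)$ that agrees with $f$ on $S\setminus\{p^*\}$ and maps $\bar p$ to $\bar q$; it lies in $\maps{\tilde S}{S(t)}$ because $\bar q\implies\bar p$, and it yields $\metric(\tilde S,S(t))^2\le\metric(S,S(t))^2-d(p^*,\bar q)^2+d(\bar p,\bar q)^2$. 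So it suffices to show $d(\bar p,\bar q)<d(p^*,\bar q)$, and this is the routine per-type computation: Lemma~\ref{lem-pred-rel} converts $\bar q\implies\bar p$, $\bar q\implies p^*$ and the failure of $p^*\implies\bar p$ into inequalities on the constants showing that the constant of $\bar q$ is a common lower bound of those of $\bar p$ and $p^*$ for types~1 and~3 (resp.\ a common upper bound for type~2), while the constant of $p^*$ is strictly farther from it than that of $\bar p$. This gives $\metric(\tilde S,S(t))<\metric(S,S(t))\neq\infty$, contradicting minimality; hence the third condition holds and $(S,\alpha)$ is a solution.

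The main obstacle, I expect, is not any single calculation but finding the right construction: using $S\maxweak S'$ only to transfer satisfiability to $\tilde S$ via the model inclusion $\thesem{S'}\subseteq\thesem{\tilde S}$, rather than trying to exploit the internal structure of $S'$ directly; and choosing $p^*$ as $f^{-1}(\bar q)$ so that the single-predicate swap is aligned with the optimal matching $f$ and therefore changes only one summand of the distance. The restriction of $\maps{\cdot}{\cdot}$ to bijections is precisely what makes $\tilde f$ an admissible competitor and keeps the relevant distances finite. Once that construction is in place, the only remaining work is the elementary three-type case analysis for $d(\bar p,\bar q)<d(p^*,\bar q)$ delivered by Lemma~\ref{lem-pred-rel}, together with the easy verifications for the first two conditions.
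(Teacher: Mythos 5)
Your proposal is correct and follows essentially the same route as the paper's proof: it derives $S\maxweak S(t)$ from $\metric(S,S(t))\neq\infty$, and refutes a violation of the third condition by swapping the single predicate $p^*=f^{-1}(\bar q)$ for the witness $\bar p$ along the optimal bijection $f$ and showing via Lemma~\ref{lem-pred-rel} that the distance to $S(t)$ strictly decreases, contradicting minimality. Your write-up is in fact slightly more careful than the paper's in one respect: you explicitly verify that the swapped set $\tilde S$ remains feasible (via $\tilde S\maxweak S'$ and the hypothesis $\thesem{H(t)\cup S'}\neq\emptyset$), a check the published proof omits even though it is needed for the contradiction with minimality taken over feasible sets.
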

\begin{proof}

First, if $\thesem{H(t) \cup S(t)} \neq \emptyset$, then there exists $\alpha 
\in \thesem{H(t) \cup S(t)}$ and $\metric(S(t), S(t)) = 0$, which is minimum. 
Thus in the remainder of the proof we assume that $\thesem{H(t) \cup S(t)} = 
\emptyset$.

Now, notice that $S(t) \not \maxweak S$, otherwise $\thesem{H(t) \cup S} = 
\emptyset$. Thus we obtain that $S \maxweak S(t)$, implying that $S$ satisfies 
the first  
condition of the relaxed attack tree decoration problem. Moreover, we also 
conclude that 
$|S| = |S(t)|$, given that $\metric(S, S(t)) \neq \infty$. 
Next, we will 
show that $S$ satisfies the third condition 
of the problem definition as well. 

Suppose we have 
$S \maxweak S' \maxweak S(t)$, but 
$S' \not \maxweak S$. 
Because 
$S' \not \maxweak S$, there must exist  $p' \in 
S'$ such that 
no $p \in S$ satisfies that $p \implies p'$. Let us analyse the three possible 
predicate types of $p'$.

\begin{enumerate}
	\item Assume $p' \equiv \ell \le a'$. Because $S' \maxweak S(t)$, there 
	must exist 
$p'' \equiv \ell \leq a''$ in $S(t)$ with $a'' \leq a'$. Now, let 
$f: S \rightarrow S(t)$ be a bijective function such that $\distance_f(S, S(t)) 
= 
\metric(S, 
S(t))$. Such a function exists given 
that $\metric(S, S(t)) 
\neq \infty$ and $S \maxweak S(t)$. Let $p \equiv \ell \leq 
a$ be the predicate in $S$ such that $f(p) = p''$. Overall we obtain that $a'' 
\leq a'$ and $a'' \leq a$. Now, given that $p$ does not imply $p'$, according 
to Lemma~\ref{lem-pred-rel} 
it must be the case that $a > a'$. Therefore we obtain the order $a > a' \geq 
a''$.  Consider the set of predicates $S'' = S \setminus \{p\} \cup \{p'\}$. 
	On the 
one hand, because 
$a' \geq a''$ it follows that $S'' \maxweak S(t)$ and $f \in \maps{S''}{S(t)}$. 
On other hand, 
because $a > a'$ and $a' \geq a''$, we obtain that $\distance_f(S, S(t)) > 
\distance_f(S'', 
S(t))$. Considering that $\distance_f(S, S(t)) =\metric(S, S(t))$, then 
$\metric(S, 
S(t)) > 
\distance_f(S'', S(t)) \geq \metric(S'', S(t))$, which contradicts the 
assumption 
that $\metric(S, S(t))$ is minimum. 

	\item The case $p' \equiv \ell = \ell' \leq a'$ is analogous to the 
previous one.  
	\item Finally assume $p' \equiv \ell \geq a'$. As in the first case we 
	obtain that 
there exist predicates $p \equiv \ell \geq 
a$ and $p'' \equiv \ell \geq a''$ in $S$ and $S(t)$, respectively, such that 
$p'' \implies p'$ and $f(p) = p''$. Given that $p$ does not imply $p'$, then we 
obtain the following order, $a'' \geq a' > a$. Again, such an order implies 
that $\distance(p, p'') > \distance(p', p'')$. Therefore, the set of predicates 
$S'' = S 
\setminus \{p\} \cup \{p'\}$ satisfies that $\metric(S, S(t)) > \metric(S'', 
S(t))$, 
which contradicts the assumption that 
$\metric(S, S(t))$ is minimum.  
\end{enumerate}
The proof concludes by remarking that $S$ satisfies the second condition of the 
relaxed attack tree decoration problem, as stated in the body of the theorem. 
\end{proof}

This theorem demonstrates a reduction of the 
relaxed attack-tree decoration problem for the maximal weakening 
relation $\maxweak$ on the given types of predicates to an 
optimisation problem that we solve via nonlinear programming. The formulation 
of the optimisation problem is given below.

\begin{corollary}
 Let $\anatree$ be an attack tree and $H(t) \cup S(t)$ an 
 attribute constraint-set for $\anatree$ where $S(t) = \{p_1, \ldots, p_n\}$ 
 contains only predicates from $\ineqpred$.
 We create the set of predicates 
$S$ by replacing every predicate $p_i \in S(t)$ by.
\[
\begin{array}{ll}
p_{\alpha} \equiv \ell \le x_i & \mbox{if $p \equiv 
\ell\le a$}\\
p_{\alpha} \equiv \ell\ge x_i  & \mbox{if $p \equiv 
\ell\ge a$}\\
p_{\alpha} \equiv \ell\le \ell' + x_i & 
\mbox{if $p \equiv 
\ell\le \ell' + a$}
\end{array}
\]  
where $x_1, \ldots, x_i$ are variables. 
A solution $\alpha, x_1, \ldots, x_n$ to the following optimisation problem 
leads to a solution 
$(S, \alpha)$ of the relaxed attack tree problem.
\begin{equation*}
\begin{aligned}
& \underset{\alpha, x_1, \ldots, x_n}{\text{minimize}}
& & d(S, S(t)) \\
& \text{subject to}
& & \alpha \in \thesem{H(t) \cup S}.
\end{aligned}
\end{equation*}
\end{corollary}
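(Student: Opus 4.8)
The plan is to reduce the corollary to Theorem~\ref{theo-main} by showing that any optimal solution of the stated optimisation problem yields a set of predicates $S$ that is both a weakening of $S(t)$ under $\maxweak$ and distance-minimal among all $\maxweak$-weakenings of $S(t)$ that keep $H(t)\cup S$ satisfiable; the theorem then supplies the valuation $\alpha$ completing the pair $(S,\alpha)$.

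First I would observe that, by construction, each predicate $p_\alpha$ built from $p_i\in S(t)$ has exactly the same type as $p_i$ but with its constant replaced by a fresh variable $x_i$; hence once $\alpha$ (and therefore the values $x_1,\ldots,x_n$) is fixed, the set $S$ obtained by substituting those values is a concrete subset of $\ineqpred$ with $|S|=|S(t)|$. Second, I would use Lemma~\ref{lem-pred-rel} to argue that, for this parametrised family, $S\maxweak S(t)$ holds precisely when each $x_i$ lies on the ``weaker'' side of $a_i$ (i.e.\ $x_i\ge a_i$ for the $\le$-types, $x_i\le a_i$ for the $\ge$-type) --- but in fact the constraint $\alpha\in\thesem{H(t)\cup S}$ together with the minimality of $d(S,S(t))$ forces this automatically: if some $x_i$ were on the strictly ``stronger'' side, replacing it by $a_i$ would only shrink the distance while still admitting $\alpha$ (since a strictly stronger predicate implies the original, so $\alpha$ a fortiori satisfies the relaxed one), contradicting optimality. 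So at the optimum $S\maxweak S(t)$ automatically. Third, the identity map $S\to S(t)$ sending $p_\alpha$ to $p_i$ is a type-preserving bijection with $\forall p\colon f(p)\implies p$, so $\maps{S}{S(t)}\neq\emptyset$ and $d(S,S(t))=\min_f d_f(S,S(t))\le d_{\mathrm{id}}(S,S(t))=\sqrt{\sum_i (x_i-a_i)^2}$, which is finite; thus $d(S,S(t))\neq\infty$.

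The heart of the argument is showing that $d(S,S(t))$ is not merely small for this particular $S$ but is the \emph{global} minimum over all $S'\subseteq\ineqpred$ with $\thesem{H(t)\cup S'}\neq\emptyset$. Given any such competitor $S'$ with $d(S',S(t))<\infty$, there is a bijection $g\in\maps{S'}{S(t)}$, and $g$ is type-preserving (by the remark after Lemma~\ref{lem-pred-rel} that $\maxweak$ on singletons only relates same-type predicates), so $S'$ can be written in exactly the parametrised form above for some choice of constants $x_i'$; since $g$ witnesses $S'\maxweak S(t)$ and $\alpha'\in\thesem{H(t)\cup S'}$, the tuple $(\alpha',x_1',\ldots,x_n')$ is feasible for the optimisation problem, whence $d(S,S(t))\le d(S',S(t))$ by optimality of the chosen solution. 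Combining this with the earlier points, $S$ satisfies the hypotheses of Theorem~\ref{theo-main}, which delivers a valuation $\alpha\in\thesem{H(t)\cup S}$ such that $(S,\alpha)$ solves the relaxed attack-tree decoration problem with respect to $\maxweak$.

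The step I expect to be the main obstacle is the bookkeeping that every distance-minimal feasible $S'$ can genuinely be put in the parametrised shape --- i.e.\ that the variable substitution in the corollary ranges over \emph{all} relevant weakenings and not just a restricted subfamily. This hinges on the type-preservation of the witnessing bijection and on the fact that $d(S',S(t))<\infty$ forces $|S'|=|S(t)|$ (the bijectivity restriction imposed just before Theorem~\ref{theo-main}); once these are in hand the reduction is essentially a matching of feasible points between the two problems, and no genuinely new estimate is required beyond what Theorem~\ref{theo-main} already provides.
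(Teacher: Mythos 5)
Your overall strategy --- reduce to Theorem~\ref{theo-main} by showing that the parametrised family ranged over by the optimisation problem contains every finite-distance feasible competitor, so that its optimiser attains the global minimum of $d(\cdot,S(t))$ --- is exactly the intended reading: the paper states the corollary without proof as an immediate consequence of the theorem, and your step~4 (any witnessing bijection is type- and label-preserving by Lemma~\ref{lem-pred-rel}, hence every competitor $S'$ with $d(S',S(t))<\infty$ already has the parametrised shape) is the substantive content that makes that reduction go through.

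One local sub-argument is wrong, though not fatally. In step~2 you claim that if some $x_i$ lies strictly on the ``stronger'' side of $a_i$, replacing $x_i$ by $a_i$ can only shrink $d(S,S(t))$. Because $d$ is a minimum over \emph{bijections}, this fails when several soft predicates share a type and a label: for $S(t)=\{\ell\le 1,\ \ell\le 2\}$ the set $S=\{\ell\le 2.5,\ \ell\le 1.5\}$ (with $x_2=1.5$ paired to $a_2=2$) has $d(S,S(t))=\sqrt{0.5}$ via the pairing $2.5\mapsto 2$, $1.5\mapsto 1$, yet replacing $1.5$ by $2$ \emph{increases} the distance to $\sqrt{1.25}$. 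The same example shows that $S\maxweak S(t)$ does not force each $x_i$ onto the weaker side of its own $a_i$, so the identity bijection of step~3 need not lie in $\maps{S}{S(t)}$. Fortunately neither claim is needed: whenever $d(S,S(t))<\infty$ there is, by definition of $d$, some bijection $f\in\maps{S}{S(t)}$ with $f(p)\implies p$ for all $p\in S$, which is verbatim the statement $S\maxweak S(t)$; and finiteness of the optimal value follows from your step~4, since any finite-distance feasible competitor is itself a feasible point of the optimisation problem (provided, as the theorem implicitly assumes, at least one such competitor exists). With steps~2--3 replaced by this direct observation, the reduction to Theorem~\ref{theo-main} is complete, and the $\alpha$ produced by the optimisation already lies in $\thesem{H(t)\cup S}$, so $(S,\alpha)$ is a solution.
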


\noindent\emph{Implementation.} 
To find a valuation function $\valuationFunc$ and a set of weakening predicates 
$S'$ so that the distance function $d(S,S(t))$ 
is minimised, we relied on the Sequential Quadratic Programming (SQP) problem 
interpretation of the relaxed attack-tree decoration problem. We further refer 
to this tool as the SQP-based tool. It is implemented using the Python 
\texttt{scipy} library that provides the Sequential Least Squares Programming 
(SLSQP) algorithm for solving optimisation problems of this type. Our 
implementation\footnote{Code available at 
\url{https://github.com/vilena/at-decorator/tree/master/SQP_decorator}} 
does not impose any 
burden on the analyst, as it allows a loose interpretation of all 
constraints together. In case the set of constraints is satisfiable, our tool 
finds an optimal solution. In case of an unsatisfiable set of constraints, our 
implementation will find an optimal solution that minimises the distance 
function between set of predicates.

\subsection{Set inclusion weakening over propositional logic}

As the analyst may require a predicate language richer 
than the one described above, we provide tool support for predicates written in 
the
propositional logic. We do so via a transformation of a relaxed attack 
tree-decoration problem instance into a Satisfiability Modulo Theories (SMT) 
instance~\cite{MouraB08}. 
SMT is the problem of determining whether a formula in the first-order logic, 
where 
some operator symbols are provided with a theory, is satisfiable. 

Given a set of predicates $S$, we use $\Gamma(S)$ to denote the first-order 
logic formula formed by all predicates in $S$ in the conjunctive form. 
Then the SMT instance resulting from the relaxed attack-tree decoration problem 
instance
with 
the attribute constraint-set $H(t) \cup S(t)$ is defined by $\Gamma(H(t) \cup 
S(t))$. If  $\Gamma(H(t) \cup 
S(t))$ is satisfiable, it follows that the decoration problem does not need to 
be relaxed. 
Otherwise, we use 
Algorithm~\ref{alg-inclusion-relation} to find a 
subset of soft predicates $S 
\subset S(t)$ that solves the relaxed attack tree decoration problem with 
respect to the inclusion relation. 

\floatname{algorithm}{Algorithm}
\begin{algorithm}[t!]
  \caption{Solving the decoration problem w.r.t. the set 
  inclusion weakening relation. \label{alg-inclusion-relation}}
  \begin{algorithmic}[1]
  \Require The relaxed attack-tree decoration problem defined by attack tree 
  $t$, attribute constraint-set $H(t) \cup S(t)$.
  \Ensure $S$ is a set of maximum cardinality such that $S \subseteq S(t)$ and 
  $\thesem{H(t) \cup S}\neq \emptyset$ .
  \State Let $S(t)$ = $\{p_1, \dots, p_n\}$ be the soft constraint set.

      \State $S \gets \emptyset$
      \For {$i = 1 .. n$}
          
          \State $S \gets S \cup \{p_i\}$
          \If {$\Gamma(H(t) \cup S)$ is not satisfiable}
              \State $S \gets S \backslash \{p_i\}$
          \EndIf 
      \EndFor
      
      \State \Return $(S, \alpha)$ where $\alpha \in \thesem{H(t) \cup 
      S}$.

  \end{algorithmic}
\end{algorithm}

Initially, the set $S$ is empty. Algorithm~\ref{alg-inclusion-relation} works 
by iteratively adding predicates 
to the set $S$, until the formula $\Gamma(H(t) \cup 
S)$ is satisfiable, while the formula $\Gamma(H(t) \cup 
S')$ is unsatisfiable for any $S' \supset S$. 
It is easy to prove that such procedure provides a 
solution to the relaxed 
attack-tree decoration problem with respect to the subset inclusion weakening 
relation, based on the assumption that hard predicates are satisfiable.

\noindent\emph{Implementation.}
We implemented our transformation relying on the well-known theorem prover 
Z3 from Microsoft\footnote{\url{https://github.com/Z3Prover/z3}}. Z3 can be 
utilised as 
a constraint solver, i.e., it can find a solution satisfying a set  of 
constraints expressed as equalities and inequalities. 
Our current implementation\footnote{Code available at 
\url{https://github.com/vilena/at-decorator/tree/master/CSP_decorator}} 
can handle all attribute domains on 
real
numbers that are defined in the ADTool~\cite{KordyKMS-QEST-13},
e.g.\ probability, minimal cost of attack, and minimal time; and it is
trivially extensible to attribute domains defined on Boolean values,
e.g.\ satisfiability of a scenario. The
analyst can further specify additional constraints, if desired. The
resulting set of constraints is passed to the Z3 prover, which reports
whether the problem is solvable or not. In case the constraint satisfaction
problem is solvable, the prover will report a complete consistent
valuation for the given tree. This valuation will satisfy the tree
structure and the constraints expressed by the analyst, and it will
agree with the given initial valuation. If the constraint satisfaction
problem is not solvable (i.e.\ the initial valuation is inconsistent),
we use Algorithm~\ref{alg-inclusion-relation} to find a subset of soft 
predicates of maximum cardinality that is satisfiable, and report a 
solution satisfying the found maximal predicate set.


\section{ATM Case Study}
\label{sec-evaluation}

We now proceed to show how our approach to attack-tree decoration can be 
applied to a real-life security scenario. The goal is to show that decoration 
can be performed in a systematic way even when the analyst only has partial 
information 
about attribute values. 

In this section, we introduce a case study related to capturing automated 
teller machine fraud
scenarios as an attack tree, and we describe the historical data
available for decorating this tree.

\subsection{ATM Security: a Case Study}

Automated teller machines (ATMs) are complex and expensive 
systems used daily by millions of bank customers worldwide. Because 
each carries a significant amount of cash, ATMs are the target of large-scale 
criminal actions. Only in 2015 
more than $16,000$ ATM incidents were reported in Europe, causing over $300$ 
million Euros
loss\footnote{\url{https://www.association-secure-transactions.eu/tag/atm-crime-report/}}.

In an attempt to provide structure to the risk assessment process and catalogue 
ATM threats, Fraile et al.\ created a comprehensive attack-defence tree 
capturing the most dangerous attack scenarios applicable to 
ATMs~\cite{PoEM-2016}. The tree modelled in~\cite{PoEM-2016} contains three 
main branches:
brute-force attacks, fraud 
attacks, and logical attacks. The attacks of the logical type make use of 
malicious software, while a brute-force attack typically ends up destroying the 
ATM.
Differently from these two attack scenarios, ATM fraud attacks involve 
conventional 
electronic devices (such as card skimmers) and require the 
participation of the victim.

\begin{figure*}[!ht]
	\centering
	\includegraphics[width=0.88\textwidth]{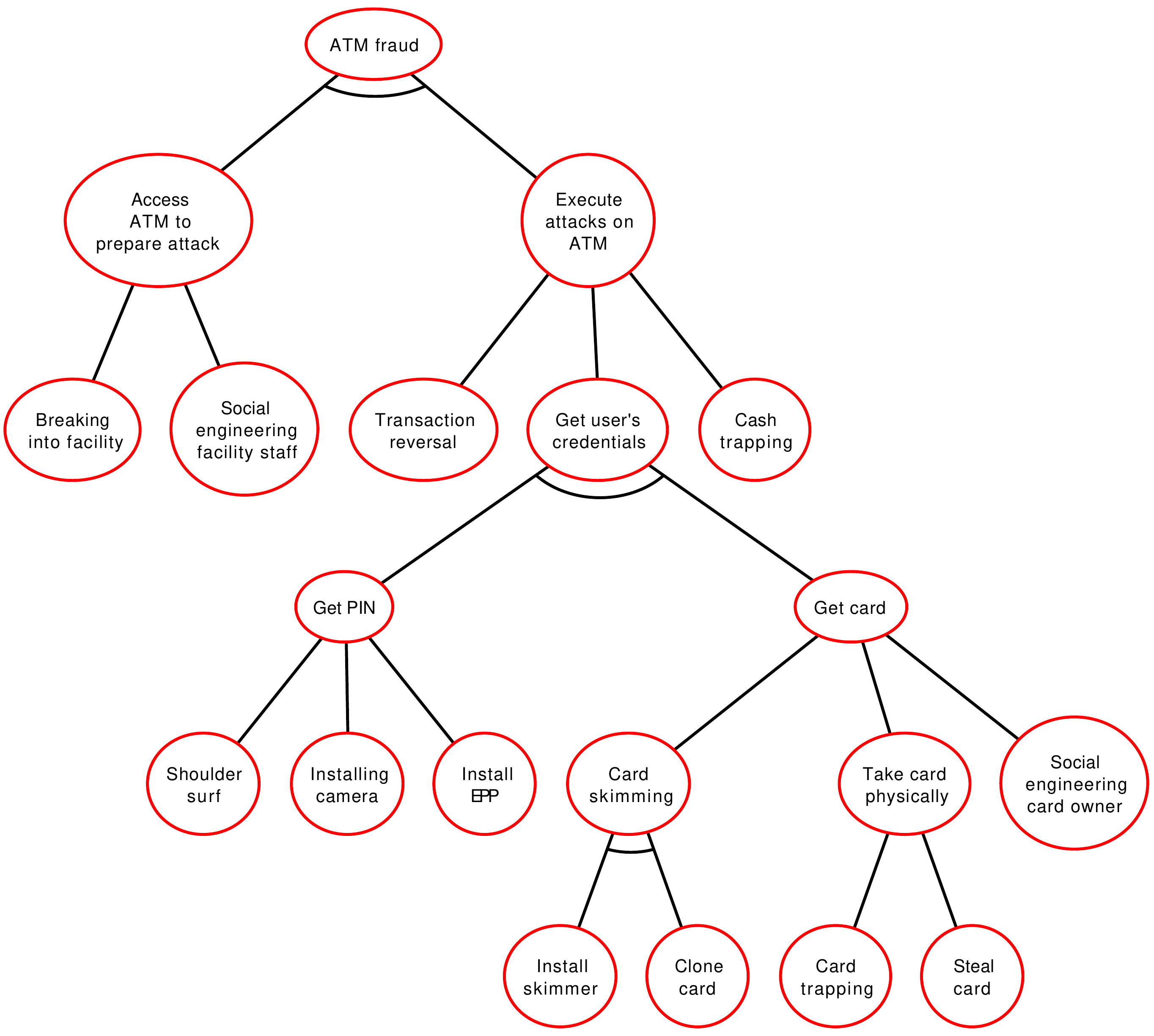}
	\caption{An attack tree modelling ATM fraud. The tree is loosely based on 
	the 
		attack-defence tree published by Fraile et al.~\cite{PoEM-2016}. 
		\label{fig-tree}}
\end{figure*}

In this empirical evaluation section we focus on ATM 
fraud, because more empirical data is available for 
these attacks than for the other types of attacks.
Figure~\ref{fig-tree} presents an attack tree characterising such attacks that 
is loosely based on 
	the attack-defence tree published by Fraile et al.~\cite{PoEM-2016}. In ATM 
	fraud, criminals need covert access to the 
ATM, as this attack typically 
requires 
opening the machine's case either by force or with a generic 
key, and installing a special device  (e.g.\ a skimmer). Then the attacker 
waits until a victim uses the ATM and, as a 
consequence, enables the installed device. Lastly, the 
attacker gets cash from the victim's account by means of various techniques, 
such as 
cash trapping, card cloning, etc.

\subsection{Decorating the ATM Fraud Attack Tree}

The decoration process we propose in this paper consists of three independent 
steps that are executed for a given attack tree. First, an attribute is chosen. 
In this case study, we focus on 
\emph{probability of success}, that is, the probability that a 
given ATM machine is used to successfully execute ATM fraud. The
attack tree structure jointly with the attribute rules determine the
hard constraint set, i.e.\ the standard bottom-up constraints derived from the 
attack thee structure.
Second, statistical information (historical data) related to the chosen 
attribute is 
gathered. Such statistical values are used to provide tree nodes with 
probability values. For the ATM fraud scenario, the available statistical data 
is presented in Section~\ref{sec:statistical-analysis}.
Lastly, relations among nodes in the tree are established based on the 
analyst's insight and domain knowledge. Section~\ref{sec:hard-constraints} 
presents the corresponding analysis for the ATM fraud scenario. The full set of 
constraints for the ATM fraud scenario is summarised in 
Section~\ref{sec:predicate-set}.

\subsubsection{Statistical Analysis}\label{sec:statistical-analysis}
The statistical values we consider here 
have been derived from the ATM Crime Report 2015 (EAST). In our case, we 
analyse 
ATM fraud incidents in Lisbon, which hosts 300 ATMs. We remark, however, 
that these values have been derived for illustrative purposes only and may not 
be accurate. 

Between $2010$ and $2015$, $83$ ATM fraud attacks have been reported in Lisbon. 
This gives a $0.0461$ probability of an ATM to be the target of fraud
within a calendar year, if we assume the uniform distribution of these
attacks. Because the report categorises ATM fraud into different
attack 
types, we can provide probability values for some attack types by analysing the 
attack frequency as reported in the EAST report. The results can be found in 
Table~\ref{table-stast-prob}. For the results reported in this table, we assume 
that historical attacks were uniformly distributed, and we rely on frequencies 
of attacks over long time periods to estimate probabilities, like in the OCTAVE 
method \cite{OCTAVE-2007}.

\begin{table}
	\scriptsize
	\centering
	\caption{\label{table-stast-prob}Historical data values identified for some 
	attack tree nodes from the ATM Crime report.}
	\begin{tabular}{m{2.0cm} m{0.1cm} m{1cm} m{0.1cm} m{3.5cm}}
		\toprule
		\bfseries Node && 
		\bfseries Prob. && 
		\bfseries Source \\
		\midrule
		\ATMFraudLabel  && $0.0046$ && ATM Crime Report 2015 (EAST) \\
		\midrule
		\CardSkimmingLabel && $0.0172$ && ATM Crime Report 2015 (EAST) \\
		\midrule
		\CardTrappingLabel && $0.0094$ && ATM Crime Report 2015 (EAST) \\
		\midrule
		\CashTrappingLabel && $0.0150$ && ATM Crime Report 2015 (EAST) \\
		\midrule
		\TransactionReversalLabel  && $0.0038$ && ATM Crime Report 2015 (EAST) 
		\\
		\bottomrule
	\end{tabular}

\end{table}

\subsubsection{Domain Knowledge Constraints}\label{sec:hard-constraints}

In the previous subsection, we have shown how available statistical data  can 
be used as a constraint in our decoration process. 
Another novelty of our approach 
is that we allow for domain knowledge constraints, that is, facts that must be 
additionally satisfied in 
the attack tree. The following list of facts is based on the previously 
mentioned ATM Crime Report 2015 and also on the European Central Bank report on 
card fraud (2015). 

\begin{itemize}
	\item \CardSkimmingLabel\ is more likely than \TakeCardLabel. 
	Moreover, \GetCredentialsLabel\ is more likely than \CashTrappingLabel, 
	which is more likely than \TransactionReversalLabel.
	\item \ShoulderSurfLabel\ is more likely than \InstallCameraLabel.
	\item  \InstallCameraLabel, \InstallEPPLabel\ and \InstallSkimmerLabel\ 
	are all equally likely.
	\item \CashTrappingLabel\ and \CardTrappingLabel\ are equally likely.
\end{itemize}

\subsection{Full Set of Predicates}\label{sec:predicate-set}
We now list the full set of predicates that will be used by the tools to solve 
the decoration problem. To simplify the presentation, we use a short-hand 
notation. All predicates listed in this section can be straightforwardly 
transformed into our predicate notation.

\noindent\emph{Hard predicates.}
Considering the attribute domain of probability of success, the attack tree 
shown in Fig.~\ref{fig-tree} corresponds to the following set of hard 
predicates:

\begin{itemize}
	\item \ATMFraudLabel\ = $\times$(\AccessATMLabel, \ExecuteAttackLabel),
	\item \AccessATMLabel\ = $+$(\BreakingIntoLabel, 
	\SocialEngineeringStaffLabel),
	\item \ExecuteAttackLabel\ = $+$(\TransactionReversalLabel, 
	\GetCredentialsLabel, \CashTrappingLabel),
	\item \GetCredentialsLabel\ = $\times$(\GetPINLabel, \GetCardLabel),
	\item \GetPINLabel\ = $+$(\ShoulderSurfLabel, \InstallCameraLabel, 
	\InstallEPPLabel),
	\item \GetCardLabel\ = $+$(\CardSkimmingLabel, \TakeCardLabel, 
	\SocialEngineerOwnerLabel),
	\item \CardSkimmingLabel\ = $\times$(\InstallSkimmerLabel, \CloneCardLabel),
	\item \TakeCardLabel\ = $+$(\CardTrappingLabel, \StealCardLabel).
\end{itemize}

\noindent\emph{Soft predicates.}
Historical data values from Table~\ref{table-stast-prob} are encoded in the 
form of soft predicates:
\begin{itemize}
	\item \ATMFraudLabel\ = $0.0046$, 
	\item \CardSkimmingLabel\ = $0.0172$,
	\item \CardTrappingLabel\ = $0.0094$,
	\item \CashTrappingLabel\ = $0.0150$,
	\item \TransactionReversalLabel\ = $0.0038$. 
\end{itemize}

We will subsequently refer to the soft predicates listed above as 
\emph{historical data constraints}.

Domain knowledge from the ATM Crime report is encoded in the form of soft 
predicates as well:
\begin{itemize}
	\item \TakeCardLabel\ $\leqslant$ \CardSkimmingLabel,
	\item \CashTrappingLabel\ $\leqslant$ \GetCredentialsLabel,
	\item \TransactionReversalLabel\ $\leqslant$ \CashTrappingLabel,
	\item \InstallCameraLabel\ $\leqslant$ \ShoulderSurfLabel,
	\item \InstallCameraLabel\ = \InstallEPPLabel,
	\item \InstallSkimmerLabel\ = \InstallEPPLabel,
	\item \InstallSkimmerLabel\ = \InstallCameraLabel,
	\item \CashTrappingLabel\ = \CardTrappingLabel,
\end{itemize}
Subsequently, we will refer to the set of soft predicates above as \emph{domain 
knowledge predicates}.

\subsection{Goals of the Analysis}

We consider that the analyst has designed an attack tree covering ATM fraud 
scenarios as presented in Figure~\ref{fig-tree}. This attack tree gives them 
the set of hard constraints given in Section~\ref{sec:predicate-set}. 
Furthermore, the analyst has elicited a set of soft constraints based on their 
knowledge of the problem space and the information available in the ATM Crime 
Report (also listed in Section~\ref{sec:predicate-set}). However, the analyst 
is not able to find enough data to estimate probabilities for all leaf nodes in 
the attack tree, what prevents them from straightforwardly applying the 
bottom-up evaluation procedure to compute the probabilities for all 
intermediate nodes and, ultimately, for the root node.

The analyst can, however, apply our methodology and decorate the attack tree. 
We consider the following possible analysis questions that can be investigated 
with our approach:
\begin{itemize}
	\item Are the attack tree and the set of constraints elicited by the
  analyst compatible? I.e.\ does the corresponding decoration problem have a 
  solution? In our notation, for an attack tree $\anatree$, and attribute 
  constraint-set $H(t) \cup S(t)$, is $ \thesem{H(t) \cup S(t)} \neq \emptyset 
  $?

	\item What is a solution for the given decoration problem? In our notation, 
	the analyst is interested in finding a solution $\alpha \in \thesem{H(t) 
	\cup S(t)}$.

	\item If the decoration problem has no solution, what is a solution that is 
	the closest to satisfying all constraints? This question corresponds to 
	solving the relaxed attack tree decoration problem formulated in 
	Definition~\ref{def:relaxed_decoration_problem}, for a chosen weakening 
	relation. 
\end{itemize}
We will demonstrate in the next section how our two implementations solve the 
relaxed attack tree decoration problem of the ATM case study, for the maximal 
and set inclusion weakening relations.

Note that one may argue that in our case study the analyst already has the 
probability of ATM fraud (the root node) from the historical data, and 
therefore, they can skip decorating the whole tree. 
However, the analyst may still want to perform the so-called \emph{what-if} 
analysis,  
which consists in analyzing different but related scenarios. For example, the 
analyst could answer questions such as: What if the 
probability of this attack is in fact higher than I envisage? How will this 
affect my security posture? The 
what-if analysis requires a fully annotated tree, which can be provided using 
our decoration technique even over partially available data. 
Furthermore, in 
general, it cannot be assumed that the root node value will always be available 
from the historical data.

\section{Empirical Evaluation Results}\label{sec:empirical}

We now show how our two implementations can be applied to analyse the
ATM case study introduced previously.

\begin{table*}[!t]
\centering
\caption{Solutions of the ATM fraud attack tree decoration problem found by our 
tools}
\label{tbl:tool-results}
\footnotesize
\begin{tabular}{r|c|c|c}\hline\hline
\multirow{5}{*}{Node Label} & \multicolumn{3}{c}{Probability of attack success 
for tool and set of predicates}\\ \cline{2-4}
& CSP & CSP & SQP \\
& hard predicates + & hard predicates + & hard predicates + \\
& historical data & historical data + & historical data + \\
& & domain knowledge & domain knowledge \\
\hline\hline
\ATMFraudLabel & $0.0046$ & $0.0046$ & $0.0046$\\ 
\AccessATMLabel & $0.0068$ & $0.0093$ & $0.0184$\\
\BreakingIntoLabel & $0.0039$ & $0.0078$ & $0.0092$\\
\SocialEngineeringStaffLabel & $0.0029$ & $0.0015$ &  $0.0092$\\
\ExecuteAttackLabel & $0.6683$ & $0.4914$ &  $0.2493$\\
\TransactionReversalLabel & $0.0038$ & $0.0038$ & $0.0038$\\
\GetCredentialsLabel & $0.6620$ & $0.4847$ & $0.2324$\\
\GetPINLabel & $0.875$ & $0.9375$ & $0.5780$\\
\ShoulderSurfLabel & $0.5$ & $0.75$ & $0.3834$\\
\InstallCameraLabel & $0.5$ & $0.5$ & $0.0973$\\
\InstallEPPLabel & $0.5$ & $0.5$ & $0.0973$\\
\GetCardLabel & $0.7566$ & $0.5170$ & $0.4021$\\
\CardSkimmingLabel & $0.0172$ & $0.0172$ & $0.0172$\\
\InstallSkimmerLabel & $0.5$ & $0.5$ & $0.0973$\\
\CloneCardLabel & $0.0344$ & $0.0344$ & $0.1768$\\
\TakeCardLabel & $0.5047$ & $0.0171$ & $0.0172$\\
\CardTrappingLabel & $0.0094$ & $0.0094$ & $0.0113$\\
\StealCardLabel & $0.5$ & $0.0078$ & $0.0059$\\
\SocialEngineerOwnerLabel & $0.5$ & $0.5$ &  $0.3677$\\
\CashTrappingLabel & $0.015$ & $0.0094$ & $0.0131$\\
\hline
\end{tabular}
\end{table*}

\begin{figure*}[!ht]
	\centering
	\includegraphics[width=1\textwidth]{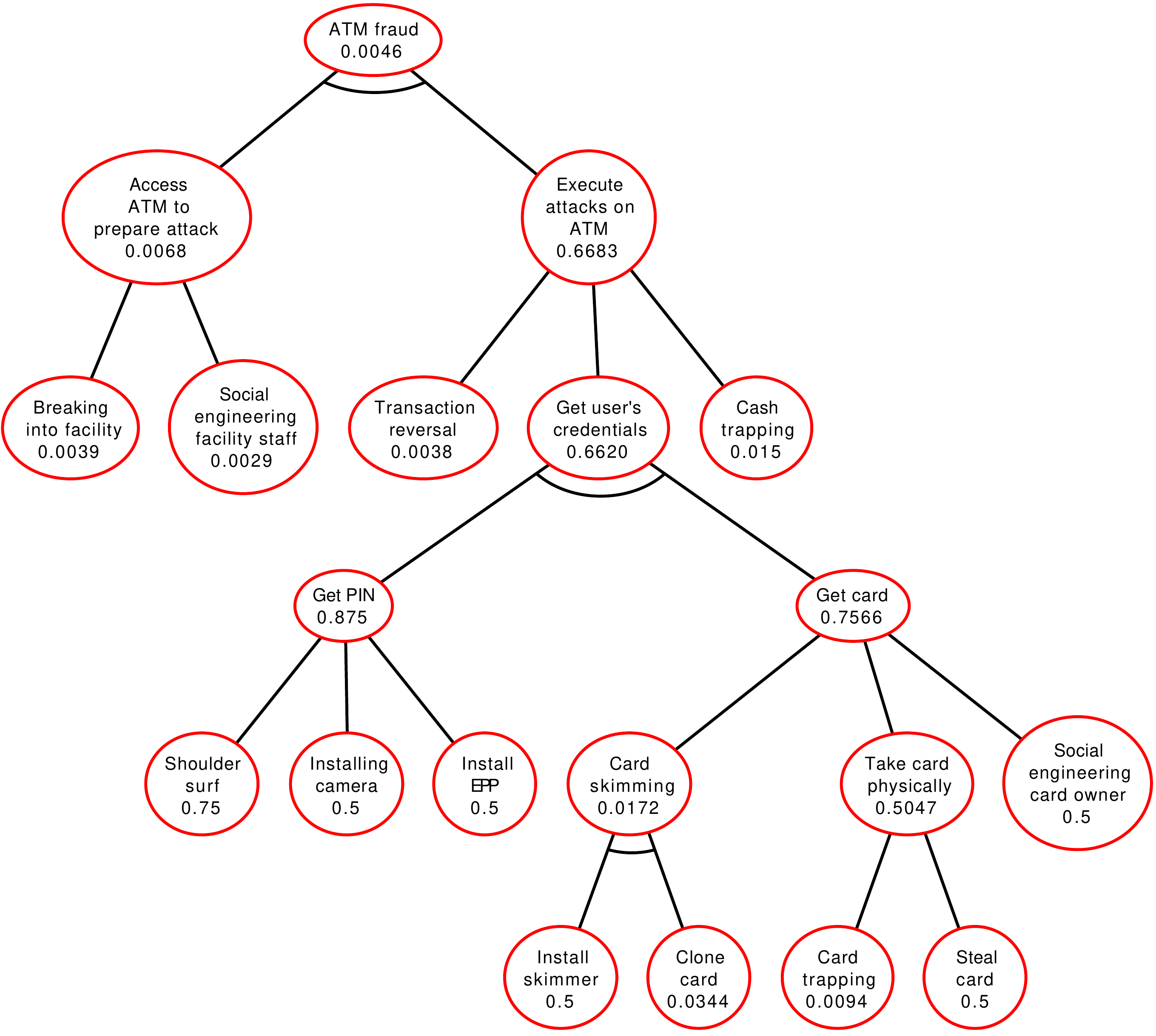}
	\caption{A valuation identified by the CSP-based tool for the attack tree 
	in 
	Fig.~\ref{fig-tree} with the hard constraints and the historical data 
	predicates used as soft constraints (predicates are listed in 
	Section~\ref{sec:predicate-set}).} 
	\label{fig-tree-before}
\end{figure*}

\begin{figure*}[!ht]
	\centering
	\includegraphics[width=1\textwidth]{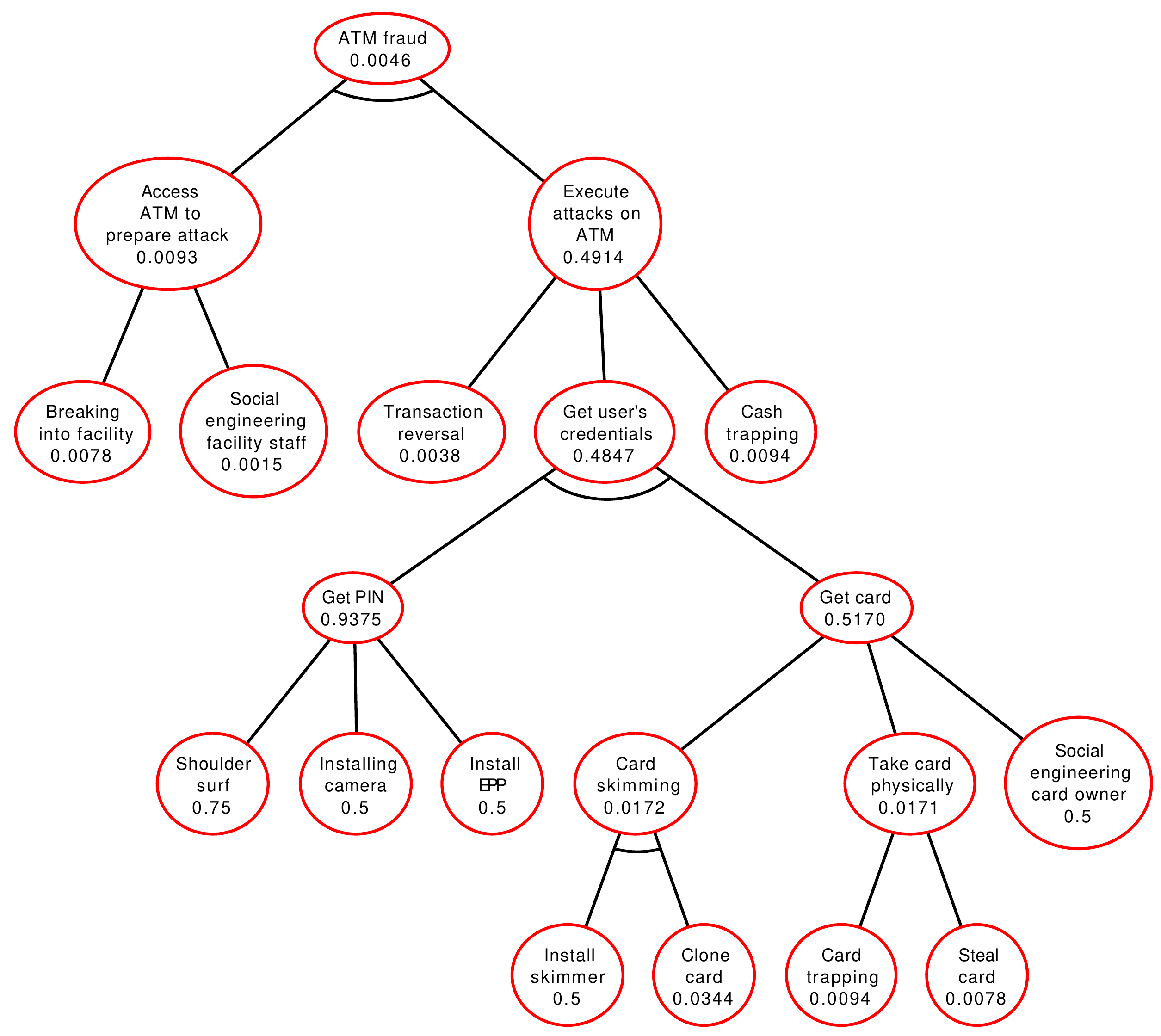}
	\caption{A valuation identified by the CSP-based tool for the attack tree 
	in 
	Fig.~\ref{fig-tree} with the predicates listed in 
	Section~\ref{sec:hard-constraints}.} 
	\label{fig-tree-after}
\end{figure*}

\begin{figure*}[!ht]
	\centering
	\includegraphics[width=1\textwidth]{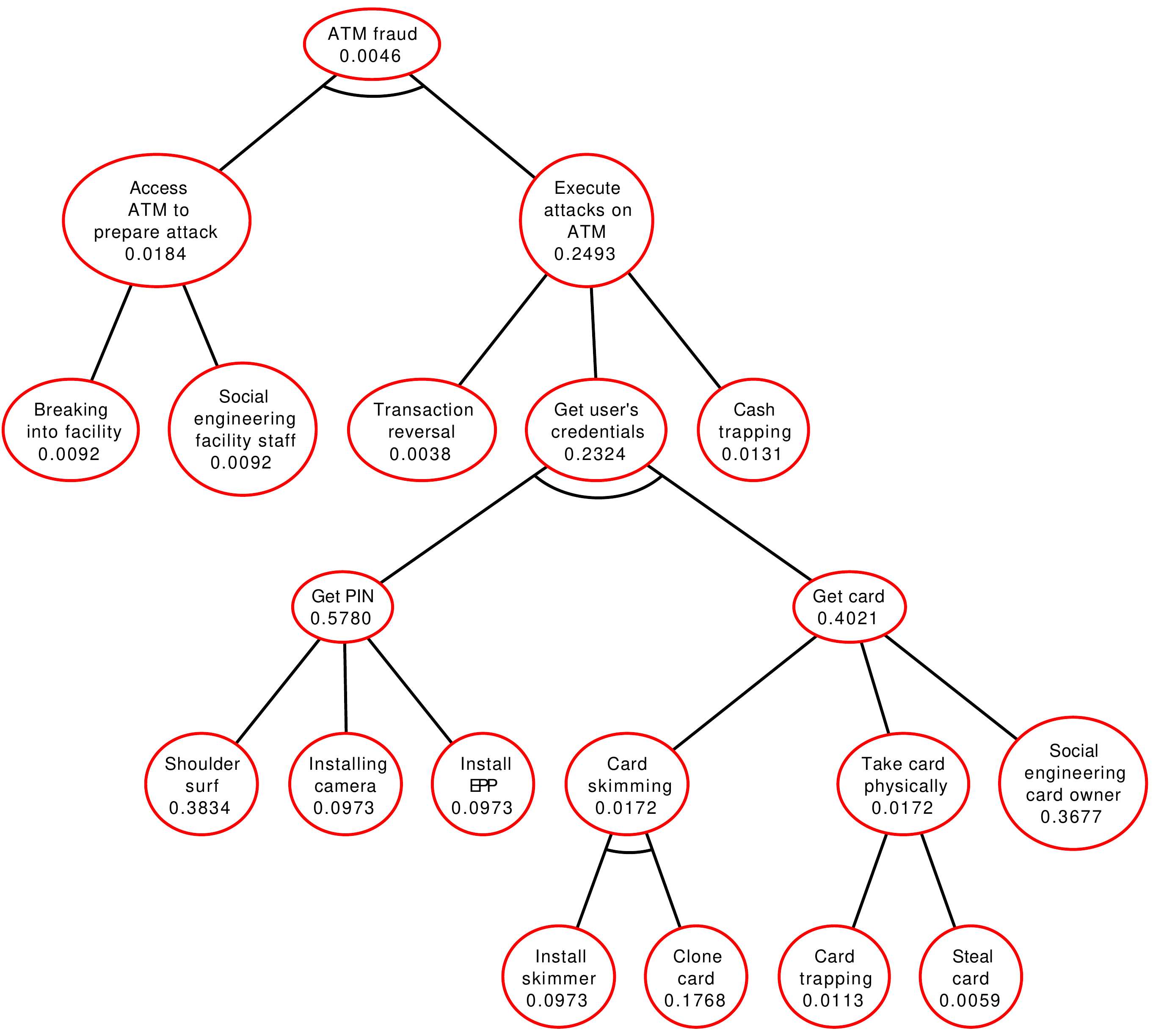}
	\caption{A valuation identified by the SQP-based tool for the attack tree 
	in 
	Fig.~\ref{fig-tree} with all hard and soft predicates listed in 
	Section~\ref{sec:hard-constraints}.} 
	\label{fig-tree-sqp}
\end{figure*}

\subsection{The CSP-based Implementation Showcase}

We first exemplify the results of the CSP-based implementation on the
ATM case study tree presented in Fig.~\ref{fig-tree}. The first
solution in Table~\ref{tbl:tool-results}
presents a possible valuation for this attack tree found by the CSP-based tool 
with the hard predicates and the historical data predicates listed in 
Section~\ref{sec:predicate-set}.  Figure~\ref{fig-tree-before}  visualises this 
solution in the ATM fraud attack tree. We have used the open source ADTool 
software 
\cite{KordyKMS-QEST-13,gadyatskaya2016attack} for visualising attack trees.

If the analyst introduces the domain knowledge predicates as an
additional set of constraints, the CSP tool indicates that
the constraint satisfaction problem becomes
unsatisfiable. Indeed, the constraints on \CardTrappingLabel\ and 
\CashTrappingLabel\ are contradictory, because historical data does not 
indicate that the probabilities of these attacks are exactly equal. 

The Z3 solver that we use as the underlying constraint satisfaction engine is 
capable of finding an unsatisfiable core of the problem. In the ATM fraud case, 
the solver reports that three predicates constitute the unsatisfiable core: 
\begin{itemize}
	\item \CardTrappingLabel\ $=$ \CashTrappingLabel, 
	\item \CardTrappingLabel\ $= 0.0094$,
	\item \CashTrappingLabel\ $= 0.015$.
\end{itemize}

The solver then proceeds to identify a maximal set of predicates that is still 
satisfiable. In our case, the solver is able to satisfy all predicates but 
\CashTrappingLabel\ $= 0.015$. The second solution in 
Table~\ref{tbl:tool-results}, illustrated in
Figure~\ref{fig-tree-after}, shows a possible
valuation found by the CSP-based tool after adding
the set of domain knowledge constraints to the original assignment
predicate set (i.e.\ excluding the constraint on equality of 
\CashTrappingLabel\ to 0.015).
 
Analysing this valuation, we can see that, for example, the probability of 
\TakeCardLabel\ has dropped significantly, and the
probability of \StealCardLabel\ has decreased with the new constraint
set. Indeed, in the first assignment set there was no constraint that
the probability of \TakeCardLabel\ is less than the
probability of \CardSkimmingLabel\ (which is statistically a rare event
itself). When we added this constraint found in the domain knowledge
catalogue, the valuation for this subtree has changed. This result
demonstrates that even if the analyst cannot obtain a valuation for some attack 
such as \StealCardLabel, they can get an estimation for it by
using other known parameters and domain knowledge.

\subsection{The SQP-based Implementation Showcase}

When running the relaxed tree decoration method considering the historical
data values in Table~\ref{table-stast-prob} and the domain knowledge
constraints from Section~\ref{sec:hard-constraints}, the optimisation
task converged to the solution presented in the third solution of
Table~\ref{tbl:tool-results}. Figure~\ref{fig-tree-sqp} presents the
decorated attack tree. 

Three of the soft constraints stemming from the historical data values 
(\ATMFraudLabel, \CardSkimmingLabel, \TransactionReversalLabel) were satisfied 
precisely without weakening. \CashTrappingLabel\ and \CardTrappingLabel\ 
constraints were satisfied by weakened predicates. The probability of 
\CardTrappingLabel\ in historical data was $0.0094$, and the optimised value is 
$0.0113$. The probability of \CashTrappingLabel\ was $0.015$ in historical 
data, and the optimised value is $0.0131$.

Six out of seven domain knowledge constraints were satisfied precisely without 
weakening.
The remaining constraint that was not satisfied precisely is the knowledge that 
\CashTrappingLabel\ and \CardTrappingLabel\ are equally likely. In the 
optimised solution they have a small difference, the probability of 
\CashTrappingLabel\ is $0.0131$ while the probability of \CardTrappingLabel\ is 
$0.0113$.

In the result of this process, four weakened predicates were found.
The soft and weakened predicates for these four cases are shown in
Table~\ref{tbl:fwp}. It is the task of the analysts to decide on whether such 
weakened predicates fit their analysis. 
For example, except for the $\TakeCardLabel$ predicate, all other 
predicates in Table~\ref{tbl:fwp} satisfy that their distance to their 
weakening predicate is lower 
than $10^{-2}$, and one may argue that in the 
probability domain a $10^{-2}$ discrepancy is acceptable.

\begin{table*}[!ht]
\caption{Weakened predicates found by the optimisation process}
\label{tbl:fwp}
\footnotesize
\centering
\begin{tabular}{ll}\hline
Soft Predicate & Weakened Predicate\\\hline\hline
\CardTrappingLabel\ $\leqslant 0.0094$ & \CardTrappingLabel\ $\leqslant 
0.0113$\\
\CashTrappingLabel\ $\leqslant 0.015$ & \CashTrappingLabel\ $\leqslant 0.0131$\\
\TakeCardLabel\ $\leqslant$ \CardSkimmingLabel\ + $0$ & \TakeCardLabel\ 
$\leqslant$ \CardSkimmingLabel\ + $0.0172$\\
\CashTrappingLabel\ $\leqslant$ \CardTrappingLabel\ + $0$ & \CashTrappingLabel\ 
$\leqslant$ \CardTrappingLabel\ + $0.0018$\\\hline
\end{tabular}
\end{table*}

\subsection{Discussion}\label{sec:discussion}
We have showcased the application of our methodology on a practical
scenario of ATM security. We have established in this empirical
validation that the attack-tree decoration methodology is versatile,
as it can be realised differently depending on the analyst's needs.
In particular, the analyst can choose a suitable weakening relation and ask
the solver to tackle the decoration problem depending on their
confidence in the tree structure and the historical data available.

When the confidence in the predicate set is high, the analyst will
work with the relaxed attack-tree decoration problem, e.g.\ using our
CSP and SQP-based tools, to obtain a solution that is the closest one to
satisfying all available data points. When the analyst has low
confidence in the data, they can work with a constraint solver, for
example, starting from our CSP-based implementation, to identify whether there
is inconsistency in the data. If the constraint satisfaction problem
is not satisfiable, the reported unsatisfiable core of the predicates
can be the first candidate to be reviewed with other experts in order to
 revise the corresponding data points.

The limitation of our methodology is that it does not allow to draw precise 
conclusions about data accuracy. For example, an unsatisfiable core reported by 
the CSP-based implementation in case of inconsistency does not guarantee that 
this is indeed the set of wrong predicates. It might be that the real problem 
lies within another, co-dependent set of predicates, where considered data 
values are inaccurate.
We plan to investigate the means to evaluate the decoration accuracy
considering e.g.\ analyst's confidence level for each data point and the size 
of the solution space in future work.

Our methodology does not fully remove the intellectual
burden from the analyst, but it equips them with an insight into
conformity of the historical data with the given attack tree
structure, and with an initial, complete attack-tree decoration. This
decoration can be further improved by the analyst by incorporating
more historical data or engaging more domain experts whenever needed.

Our two implementations demonstrate that the methodology is practical
and it works on real attack trees. In general, the non-linear constraint
satisfaction problem used by the CSP-based implementation and the SQP problem 
are
NP-hard \cite{manyem2012computational,T1993}. Note that the complexity of
the attack-tree decoration task itself is unknown. Therefore, our
implementations will not scale to very large attack trees. To the best
of our knowledge, there is no statistics on the average or maximal
sizes of attack trees in practice. However, manually designed attack
trees, in our own experience, rarely have more than 100 nodes, because
they quickly become incomprehensible for humans~\cite{gadyatskaya2017new}. 
Therefore, we expect that our
implementations will work reasonably well with the majority of attack
trees created by practitioners.


\section{Conclusions}\label{sec:conclusions}
In this article we proposed the first quantitative attribute approach for 
attack trees that is able to deal with incomplete information. On the one hand, 
we have well-known classical attributes on attack trees expressing relations 
between nodes in a tree. On the other hand, we recognise that 
reliable information can be obtained from historical data and
domain knowledge. This type of information is typically not included
in the semantics of attack trees, and so it has
been largely ignored until now. 
We take these two views on the world and we
verify that they are consistent. Since Schneier's definition of attack trees, 
bottom-up evaluation
of attributes was the norm. We are the first to introduce a complete
view on attribute values, including missing values and approximation. 

The main benefit of our computational methodology is that it allows to obtain a 
consistent valuation for all attack tree nodes, even if some leaf nodes data is 
missing. This is not possible with the standard bottom-up decoration approaches.
We have shown that the distinction between
hard and soft constraints can be handy, as it allows the analyst to better 
understand the solution space and to interactively engage with the decoration 
problem. 
Lastly, 
we provide two implementations and we demonstrate the feasibility of the 
suggested approach on a case study. Our proof-of-concept implementations 
demonstrate the 
viability of the method, and they can be easily introduced into the established 
tools working with attack trees, such as the SecurITree tool and the ADTool.

\section*{Acknowledgments}
\noindent
The research leading
to these results has received funding from the European Union Seventh Framework
Programme under grant agreement number 318003 (TREsPASS), and from
the Fonds National de la Recherche Luxembourg under grant 
C13/IS/5809105 (ADT2P).

\bibliographystyle{plain}

\begin{thebibliography}{10}

\bibitem{ahmed2007review}
Ammar Ahmed, Berman Kayis, and Sataporn Amornsawadwatana.
\newblock A review of techniques for risk management in projects.
\newblock {\em Benchmarking: An International Journal}, 14(1):22--36, 2007.

\bibitem{amenaza}
Amenaza.
\newblock Securitree software, 2017.

\bibitem{Arnold-POST14}
Florian Arnold, Holger Hermanns, Reza Pulungan, and Mari{\"e}lle Stoelinga.
\newblock {Time-Dependent Analysis of Attacks}.
\newblock In {\em Proc. 3rd Int. Conf. on Principles of Security and Trust
  (POST'14)}, volume 8414 of {\em LNCS}, pages 285--305. Springer, 2014.

\bibitem{Aslanyan-POST-2015}
Z.~Aslanyan and F.~Nielson.
\newblock Pareto efficient solutions of attack-defence trees.
\newblock In {\em Proc. 4th Int. Conf. on Principles of Security and Trust
  (POST'15)}, volume 9036 of {\em LNCS}, pages 95--114. Springer, 2015.

\bibitem{aslanyan2016quantitative}
Zaruhi Aslanyan, Flemming Nielson, and David Parker.
\newblock Quantitative verification and synthesis of attack-defence scenarios.
\newblock In {\em Proc. 29th IEEE Computer Security Foundations Symposium
  (CSF'16)}, pages 105--119. IEEE, 2016.

\bibitem{aven2007unified}
Terje Aven.
\newblock A unified framework for risk and vulnerability analysis covering both
  safety and security.
\newblock {\em Reliability engineering \& System safety}, 92(6):745--754, 2007.

\bibitem{BaKoMeSc}
A.~Bagnato, B.~Kordy, P.~H. Meland, and P.~Schweitzer.
\newblock Attribute decoration of attack--defense trees.
\newblock {\em Int. J. of Sec. Soft. Engineering}, 3(2):1--35, 2012.

\bibitem{baker2007necessary}
Wade~H. Baker, Loren~P. Rees, and Peter~S. Tippett.
\newblock Necessary measures: metric-driven information security risk
  assessment and decision making.
\newblock {\em Communications of the ACM}, 50(10):101--106, 2007.

\bibitem{benini2008risk}
Marco Benini and Sabrina Sicari.
\newblock Risk assessment in practice: A real case study.
\newblock {\em Computer communications}, 31(15):3691--3699, 2008.

\bibitem{bistarelli2006defense}
Stefano Bistarelli, Fabio Fioravanti, and Pamela Peretti.
\newblock Defense trees for economic evaluation of security investments.
\newblock In {\em Proc. 1st Int. Conf. on Availability, Reliability and
  Security (ARES'06)}. IEEE, 2006.

\bibitem{bohme2010security}
Rainer B{\"o}hme.
\newblock Security metrics and security investment models.
\newblock In {\em Proc. 5th Int. Workshop on Security (IWSEC'10)}, volume 6434
  of {\em LNCS}, pages 10--24. Springer, 2010.

\bibitem{DBLP:conf/gamesec/BuldasL13}
Ahto Buldas and Aleksandr Lenin.
\newblock New efficient utility upper bounds for the fully adaptive model of
  attack trees.
\newblock In {\em Proc. 4th Int. Conf. on Decision and Game Theory for Security
  (GameSec'13)}, volume 8252 of {\em LNCS}, pages 192--205. Springer, 2013.

\bibitem{OCTAVE-2007}
R.~Caralli, J.~Stevens, L.~Young, and W.~Wilson.
\newblock Introducing {OCTAVE Allegro: Improving} the information security risk
  assessment process.
\newblock Technical Report CMU/SEI-2007-TR-012, Software Engineering Institute,
  Carnegie Mellon University, 2007.

\bibitem{dacier1996models}
Marc Dacier, Yves Deswarte, and Mohamed Ka{\^a}niche.
\newblock Models and tools for quantitative assessment of operational security.
\newblock In {\em Proc. IFIP Int. Conf. on ICT Systems Security and Privacy
  Protection (SEC'96)}, IFIPAICT, pages 177--186. Springer, 1996.

\bibitem{de2017using}
M.~H. de~Bijl.
\newblock Using data analysis to enhance attack trees.
\newblock In {\em Proc. Twente Student Conference}, 2017.

\bibitem{MouraB08}
Leonardo~Mendon{\c{c}}a de~Moura and Nikolaj Bj{\o}rner.
\newblock {Z3:} an efficient {SMT} solver.
\newblock In {\em {TACAS}}, volume 4963 of {\em Lecture Notes in Computer
  Science}, pages 337--340. Springer, 2008.

\bibitem{PoEM-2016}
Marlon Fraile, Margaret Ford, Olga Gadyatskaya, Rajesh Kumar, Mari{\"{e}}lle
  Stoelinga, and Rolando Trujillo{-}Rasua.
\newblock Using attack-defense trees to analyze threats and countermeasures in
  an {ATM:} {A} case study.
\newblock In {\em Proc. 9th IFIP Working Conference on the Practice of
  Enterprise Modeling (PoEM'16)}, Lecture Notes in Business Information
  Processing, pages 326--334. Springer, 2016.

\bibitem{gadyatskaya2017new}
O.~Gadyatskaya and R.~Trujillo-Rasua.
\newblock New directions in attack tree research: {Catching} up with industrial
  needs.
\newblock In {\em Proc. of GraMSec}, pages 115--126. Springer, 2017.

\bibitem{gadyatskaya2016modelling}
Olga Gadyatskaya, Ren{\'e}~Rydhof Hansen, Kim~Guldstrand Larsen, Axel Legay,
  Mads~Chr. Olesen, and Danny~B{\o}gsted Poulsen.
\newblock Modelling attack-defense trees using timed automata.
\newblock In {\em Proc. Int. Conf. on Formal Modeling and Analysis of Timed
  Systems (FORMATS'16)}, volume 9884 of {\em LNCS}, pages 35--50. Springer,
  2016.

\bibitem{gadyatskaya2016bridging}
Olga Gadyatskaya, Carlo Harpes, Sjouke Mauw, C{\'e}dric Muller, and Steve
  Muller.
\newblock Bridging two worlds: reconciling practical risk assessment
  methodologies with theory of attack trees.
\newblock In {\em Proc. 3rd Int. Workshop on Graphical Models for Security
  (GraMSec'16)}, volume 9987 of {\em LNCS}, pages 80--93. Springer, 2016.

\bibitem{gadyatskaya2016attack}
Olga Gadyatskaya, Ravi Jhawar, Piotr Kordy, Karim Lounis, Sjouke Mauw, and
  Rolando Trujillo-Rasua.
\newblock Attack trees for practical security assessment: ranking of attack
  scenarios with adtool 2.0.
\newblock In {\em Proc. 13th Int. Conf. on Quantitative Evaluation of Systems
  (QEST'16)}, volume 9826 of {\em LNCS}, pages 159--162. Springer, 2016.

\bibitem{hong2017survey}
Jin~B. Hong, Dong~Seong Kim, Chun-Jen Chung, and Dijiang Huang.
\newblock A survey on the usability and practical applications of graphical
  security models.
\newblock {\em Computer Science Review}, 26:1--16, 2017.

\bibitem{horne2017semantics}
Ross Horne, Sjouke Mauw, and Alwen Tiu.
\newblock Semantics for specialising attack trees based on linear logic.
\newblock {\em Fundamenta Informaticae}, 153(1-2):57--86, 2017.

\bibitem{jaquith2007security}
Andrew Jaquith.
\newblock {\em Security metrics}.
\newblock Pearson Education, 2007.

\bibitem{jhawar2015attack}
Ravi Jhawar, Barbara Kordy, Sjouke Mauw, Sa{\v{s}}a Radomirovi{\'c}, and
  Rolando Trujillo-Rasua.
\newblock Attack trees with sequential conjunction.
\newblock In {\em Proc. IFIP TC-11 Int. Information Security and Privacy
  Conference (IFIPSec'15)}, volume 455 of {\em IFIPAICT}, pages 339--353.
  Springer, 2015.

\bibitem{jhawar2016stochastic}
Ravi Jhawar, Karim Lounis, and Sjouke Mauw.
\newblock A stochastic framework for quantitative analysis of attack-defense
  trees.
\newblock In {\em Proc.\ 12th Workshop on Security and Trust Management
  (STM'16)}, volume 9871 of {\em LNCS}, pages 138--153. Springer, 2016.

\bibitem{KoMaSc-2012}
B.~Kordy, S.~Mauw, and P.~Schweitzer.
\newblock Quantitative questions on attack-defense trees.
\newblock In {\em Proc.\ 15th Annual International Conference on Information
  Security and Cryptology (ICISC'12)}, volume 7839 of {\em LNCS}, pages 49--64.
  Springer, 2013.

\bibitem{KordyKMS-QEST-13}
Barbara Kordy, Piotr Kordy, Sjouke Mauw, and Patrick Schweitzer.
\newblock {ADTool}: Security analysis with attack--defense trees.
\newblock In {\em Proc.\ 10th International Conference on Quantitative
  Evaluation of SysTems (QEST'13)}, volume 8054 of {\em LNCS}, pages 173--176.
  Springer, 2013.

\bibitem{KoMaRaSc_JLC}
Barbara Kordy, Sjouke Mauw, Sa\v{s}a Radomirovi\'c, and Patrick Schweitzer.
\newblock {Attack--Defense Trees}.
\newblock {\em Journal of Logic and Computation}, 24(1):55--87, 2014.

\bibitem{kordy2014dag}
Barbara Kordy, Ludovic Pi{\`e}tre-Cambac{\'e}d{\`e}s, and Patrick Schweitzer.
\newblock {DAG}-based attack and defense modeling: {Don't} miss the forest for
  the attack trees.
\newblock {\em Computer science review}, 13:1--38, 2014.

\bibitem{kordy2012computational}
Barbara Kordy, Marc Pouly, and Patrick Schweitzer.
\newblock Computational aspects of attack--defense trees.
\newblock In {\em Int. Joint Conferences on Security and Intelligent
  Information Systems (SIIS'11)}, volume 7053 of {\em LNCS}, pages 103--116.
  Springer, 2012.

\bibitem{KoPoSc_iFM14}
Barbara Kordy, Marc Pouly, and Patrick Schweitzer.
\newblock A probabilistic framework for security scenarios with dependent
  actions.
\newblock In {\em Proc. Integrated Formal Methods (IFM'14)}, volume 8739 of
  {\em LNCS}, pages 256--271, 2014.

\bibitem{kumar2015quantitative}
Rajesh Kumar, Enno Ruijters, and Mari{\"e}lle Stoelinga.
\newblock Quantitative attack tree analysis via priced timed automata.
\newblock In {\em Proc. Int. Conf. on Formal Modeling and Analysis of Timed
  Systems (FORMATS'15)}, volume 9268 of {\em LNCS}, pages 156--171. Springer,
  2015.

\bibitem{lenin2014attacker}
A.~Lenin, J.~Willemson, and D.~P. Sari.
\newblock Attacker profiling in quantitative security assessment based on
  attack trees.
\newblock In {\em 19th Nordic Conference on Secure IT Systems (NordSec'14)},
  volume 8788 of {\em LNCS}, pages 199--212. Springer, 2014.

\bibitem{TUT:LeninPhD2015}
Aleksandr Lenin.
\newblock {\em Reliable and Efficient Determination of the Likelihood of
  Rational Attacks}.
\newblock PhD thesis, Tallinn University of Technology, 2015.
\newblock TUT Press.

\bibitem{DBLP:conf/gamesec/LeninB14}
Aleksandr Lenin and Ahto Buldas.
\newblock Limiting adversarial budget in quantitative security assessment.
\newblock In {\em Proc. 5th Int. Conf. on Decision and Game Theory for Security
  (GameSec'14)}, volume 8840 of {\em LNCS}, pages 155--174. Springer, 2014.

\bibitem{DBLP:conf/gamesec/LeninWC15}
Aleksandr Lenin, Jan Willemson, and Anton Charnamord.
\newblock Genetic approximations for the failure-free security games.
\newblock In {\em Proc. 6th Int. Conf. on Decision and Game Theory for Security
  (GameSec'15)}, volume 9406 of {\em LNCS}, pages 311--321. Springer, 2015.

\bibitem{mahmood2013fuzzy}
Yasser~A. Mahmood, Alireza Ahmadi, Ajit~Kumar Verma, Ajit Srividya, and Uday
  Kumar.
\newblock Fuzzy fault tree analysis: A review of concept and application.
\newblock {\em Int. J. of System Assurance Engineering and Management},
  4(1):19--32, 2013.

\bibitem{manyem2012computational}
Prabhu Manyem and Julien Ugon.
\newblock Computational complexity, {NP} completeness and optimization duality:
  A survey.
\newblock {\em Electronic Colloquium on Computational Complexity (ECCC)},
  19(9), 2012.

\bibitem{MaOo}
Sjouke Mauw and Martijn Oostdijk.
\newblock Foundations of attack trees.
\newblock In {\em Proc.\ 8th Int. Conf. on Information Security and Cryptology
  (ICISC'05)}, volume 3935 of {\em LNCS}, pages 186--198. Springer, 2006.

\bibitem{oppliger2015quantitative}
Rolf Oppliger.
\newblock Quantitative risk analysis in information security management: a
  modern fairy tale.
\newblock {\em IEEE Security \& Privacy}, 13(6):18--21, 2015.

\bibitem{potteiger2016software}
Bradley Potteiger, Goncalo Martins, and Xenofon Koutsoukos.
\newblock Software and attack centric integrated threat modeling for
  quantitative risk assessment.
\newblock In {\em Proc. Symposium and Bootcamp on the Science of Security
  (HotSos'16)}, pages 99--108. ACM, 2016.

\bibitem{roy2012scalable}
A.~Roy, D.~S. Kim, and K.~Trivedi.
\newblock Scalable optimal countermeasure selection using implicit enumeration
  on attack countermeasure trees.
\newblock In {\em Proc. IEEE/IFIP Int. Conf. on Dependable Systems and Networks
  (DSN'12)}. IEEE, 2012.

\bibitem{ruijters2015fault}
Enno Ruijters and Mari{\"e}lle Stoelinga.
\newblock Fault tree analysis: a survey of the state-of-the-art in modeling,
  analysis and tools.
\newblock {\em Computer Science Review}, 15:29--62, 2015.

\bibitem{sarabi2015prioritizing}
Armin Sarabi, Parinaz Naghizadeh, Yang Liu, and Mingyan Liu.
\newblock Prioritizing security spending: A quantitative analysis of risk
  distributions for different business profiles.
\newblock In {\em Proc. 14th Annual Workshop on the Economics of Information
  Security (WEIS'15)}, 2015.

\bibitem{Schn}
Bruce Schneier.
\newblock {Attack Trees}.
\newblock {\em Dr. Dobb's Journal of Software Tools}, 24(12):21--29, 1999.

\bibitem{schneier2011secrets}
Bruce Schneier.
\newblock {\em Secrets \& Lies: Digital Security in a Networked World}.
\newblock John Wiley \& Sons, Inc., New York, NY, USA, 2000.

\bibitem{shostack2014threat}
A.~Shostack.
\newblock {\em Threat modeling: Designing for security}.
\newblock John Wiley \& Sons, 2014.

\bibitem{T1993}
Edward P.~K. Tsang.
\newblock {\em Foundations of constraint satisfaction.}
\newblock Computation in cognitive science. Academic Press, 1993.

\bibitem{vose2008risk}
David Vose.
\newblock {\em Risk analysis: a quantitative guide}.
\newblock John Wiley \& Sons, 2008.

\end{thebibliography}

\end{document}